\newtheorem{thm}{Theorem}[section]
\newtheorem{lem}[thm]{Lemma}
\newtheorem{cor}[thm]{Corollary}
\newtheorem{pro}[thm]{Proposition}
\theoremstyle{definition}
\newtheorem{ex}[thm]{Example}
\newtheorem{rmk}[thm]{Remark}
\newtheorem{defi}[thm]{Definition}
\newcommand {\emptycomment}[1]{}
\newcommand{\lon }{\,\rightarrow\,}
\newcommand{\be }{\begin{equation}}
\newcommand{\ee }{\end{equation}}
\newcommand{\g}{\mathfrak g}
\newcommand{\huaB}{\mathcal{B}}
\newcommand{\huaO}{{\mathcal{O}}}
\newcommand{\half}{\frac{1}{2}}
\newcommand{\Id}{{\rm{Id}}}
\newcommand{\br}[1]{   [ \cdot,    \cdot  ]   }
\newcommand{\Gr}{\mathrm{Gr}}
\newcommand{\Ad}{\mathrm{Ad}}
\newcommand{\Aut}{\mathrm{Aut}}
\newcommand{\Ker}{\mathrm{Ker}}
\newcommand{\Img}{\mathrm{Im}}
\begin{document}

\title{Rota-Baxter operators on braces, post-braces and the Yang-Baxter equation}

\author{Li Guo}
\address{Department of Mathematics and Computer Science,
         Rutgers University,
         Newark, NJ 07102}
\email{liguo@rutgers.edu}

\author{Yan Jiang}
\address{Department of Mathematics, City University of Hong Kong, Hong Kong SAR, China}
\email{yanjiang22@mails.jlu.edu.cn; yjian24@cityu.edu.hk}

\author{Yunhe Sheng}
\address{Department of Mathematics, Jilin University, Changchun 130012, Jilin, China}
\email{shengyh@jlu.edu.cn}

\author{You Wang}
\address{Department of Mathematics, Jilin University, Changchun 130012, Jilin, China}
\email{wangyou20@mails.jlu.edu.cn}

\begin{abstract}
Combining the notions of braces and relative Rota-Baxter operators on groups in connection with the Yang-Baxter equation and a factorization theorem of Lie groups from integrable systems, relative Rota-Baxter operators on braces and post-braces are introduced. A relative Rota-Baxter operator on a brace naturally induces a post-brace, and conversely, every post-brace determines a relative Rota-Baxter operator on its sub-adjacent brace. Furthermore, a post-brace yields two Drinfel'd-isomorphic solutions to the Yang-Baxter equation. As a special case, {\it enhanced} relative Rota-Baxter operators give rise to matched pairs of braces. Focusing on enhanced Rota-Baxter operators on two-sided braces, a corresponding factorization theorem is established. Examples are provided from the two-sided brace associated with the three-dimensional Heisenberg Lie algebra.
\end{abstract}

\renewcommand{\thefootnote}{}

\footnotetext{2020 Mathematics Subject Classification.
17B38, 
16T25, 
}

\keywords{Rota-Baxter operator, Yang-Baxter equation,  brace, matched pair of braces, post-brace}

\maketitle

\vspace{-1cm}

\tableofcontents

\vspace{-1cm}

\allowdisplaybreaks

\section{Introduction}

In this paper, relative Rota-Baxter operators on braces are introduced to connect to post-braces and matched pairs of braces, to construct solutions of the Yang-Baxter equation, and to give a factorization theorem for two-sided braces.

\subsection{Braces and set-theoretical solutions of the Yang-Baxter equation}

The Yang-Baxter equation first appeared in theoretical physics and statistical mechanics in the works of C.\,N. Yang \cite{Yang}. Independently, G. Baxter employed the Yang-Baxter equation in his solution to the eight-vertex model \cite{Baxter}. Since then, the Yang-Baxter equation has been extensively studied, leading to broad achievements in various areas of mathematics and mathematical physics.
In \cite{Drinfeld}, Drinfel'd proposed to study set-theoretical solutions of the Yang-Baxter equation on a set $X$, which is a bijective map
$ R:X \times X \to X \times X$ 
 satisfying the following equation:
$$ (R\times \Id)(\Id \times R)(R\times \Id)=(\Id \times R)(R\times \Id)(\Id \times R).  $$
In the seminal work \cite{ESS},
Etingof-Schedler-Soloviev   introduced the structure group $G(X,R)$  to classify  set-theoretical solutions.
Later on,  Lu-Yan-Zhu \cite{LYZ} showed that braided groups   give rise to set-theoretical solutions of the Yang-Baxter equation. In \cite{DRS}, Doikou-Rybolowicz-Stefanelli introduced the notion of Drinfel'd homomorphisms between set-theoretical solutions of the Yang-Baxter equation and showed that
Drinfel'd isomorphic classes of left non-degenerate solutions correspond to isomorphic classes of left
shelves endowed with special endomorphisms.

As a generalization of radical rings, braces were introduced by Rump \cite{Rump3,Rump} to produce non-degenerate involutive solutions of the Yang-Baxter equation. It was proved that braces are equivalent to bijective $1$-cocycles in \cite{CJd}. See \cite{CMS,CGO,Gateva-Ivanova2,MRS,Smoktunowicz1} for recent achievements in the study of braces and their related set-theoretical solutions to the Yang-Baxter equation.

Afterwards, skew braces were introduced by Guarnieri and Vendramin \cite{GV} as a nonabelian generalization of braces. Then the factorization problem for skew-braces was systematically studied in \cite{JKVV}. It is remarkable that skew braces are connected with other algebraic structures in numerous areas, such as matched pairs of groups, groups with exact factorizations, rings and near-rings, regular subgroups, bijective $1$-cocycles and Hopf-Galois extensions. These connections led to several new types of solutions of the Yang-Baxter equation\,\cite{SV}. Later on, Brzezi\'nski \cite{Tomasz} introduced the notion of skew trusses, as a common generalization of skew left rings and skew braces. More recently, Trappeniers \cite{Trappeniers} studied two-sided skew braces and showed that every two-sided skew brace is an extension of a weakly trivial skew brace by a two-sided brace.

\vspace{-.1cm}

\subsection{Rota-Baxter operators on groups and post-groups}

The notion of Rota-Baxter operators on associative algebras was introduced by G. Baxter
\cite{G-Baxter} in his probability study to understand Spitzers identity in fluctuation theory, and found applications in Connes-Kreimer's algebraic approach to the renormalization in quantum field theory \cite{CK}. See \cite{Guo} for more details. The notion of relative Rota-Baxter operators (also called $\huaO$-operator) on Lie algebras was introduced in \cite{Kupershmidt}, which are closely related to the classical Yang-Baxter equation and naturally induce pre-Lie or post-Lie algebras \cite{BGN,Burde}. This connection generalizes to the splitting of operads \cite{BBGN}.

The notion of Rota-Baxter operators on Lie groups was introduced in \cite{GLS}, whose differentiation give rise to Rota-Baxter operators on the corresponding Lie algebras.
The authors obtained a factorization theorem of Rota-Baxter Lie groups, realizing the Global Factorization Theorem of Semenov-Tian-Shansky on Lie groups\,\cite{STS} without going through integration.
Consequently, Goncharov defined Rota-Baxter operators on cocommutative Hopf
algebras in \cite{Go} such that many classical results still hold at the Hopf algebra level. In \cite{BaG,BaG2}, Bardakov and Gubarev discovered the relation between skew braces and Rota-Baxter groups. In \cite{BGST}, the notion of post-groups was introduced, as the derived structure of Rota-Baxter operators on groups. Post-groups are equivalent to skew braces, and lead to non-involutive solutions of the Yang-Baxter equation. Recently, the notion of Rota-Baxter type operators on trusses was introduced in \cite{CEMM} from the perspective of Rota-Baxter operators on rings.

The notion of relative Rota-Baxter operators on groups was introduced in \cite{JSZ}, in order to establish a local Lie theory   from the perspective of the local integration and differentiation.  It was shown that relative Rota-Baxter operators on groups give rise to matched pairs of groups, a notion critical to the Yang-Baxter equation\,\cite{ESS,LYZ}.
See \cite{Ta} for further details.

\vspace{-.1cm}

\subsection{Main results and outline of the paper}
In this paper, we introduce the notion of {\bf relative Rota-Baxter operators on braces}, generalizing the Rota-Baxter theory from groups to the significantly richer setting of braces that comprise of two compatible group laws. The added complexity of braces necessitates new ideas and constructions to obtain a coherent theory with expected applications to the Yang-Baxter equation.

We develop the new framework in three steps.
First, we introduce the notions of relative Rota-Baxter operators on braces and of post-braces, generalizing the classical correspondence between relative Rota-Baxter operators on groups and post-groups. An interesting phenomenon is that a post-brace  produces a pair of Drinfel'd-isomorphic solutions of the Yang-Baxter equation.
Second, we impose an {\em enhanced} condition on relative Rota-Baxter operators, and show that it provides the needed restriction to obtain matched pairs of braces that extend matched pairs of groups.
Finally, we apply enhanced Rota-Baxter operators to two-sided braces and establish a new factorization theorem for braces in the sense of \cite{JKVV}, generalizing the classical factorization theorems for Lie groups from integrable systems\,\cite{GLS, STS}.

The paper is organized as follows. In Section \ref{sec:rRBO-brace},  we introduce the notions of relative Rota-Baxter operators on braces and post-braces,  and study their relations.
Moreover, we prove that a post-brace gives rise to two set-theoretical solutions of the Yang-Baxter equation that are Drinfel'd isomorphic (Theorem\,\ref{post-brace-YBE}).
In Section \ref{sec:WRBO-MP-brace}, we define enhanced relative Rota-Baxter operators on braces and matched pairs of braces, and show that an enhanced relative Rota-Baxter operator on a brace induces a matched pair of braces (Theorem\,\ref{rRBO-mp}).
In Section \ref{sec:RBO-two-brace}, we utilize the notion of (enhanced) Rota-Baxter operators on two-sided braces to obtain a factorization theorem of two-sided braces (Theorem\,\ref{fac-thm-eRB-brace}).
As an illustration of the general results, in Section \ref{sec:enumeration}, we provide various examples based on the brace induced by the $3$-dimensional Heisenberg Lie algebra.

\section{Relative Rota-Baxter operators on braces and post-braces}\label{sec:rRBO-brace}

In this section, we first recall the semi-trivial action of braces. Then we define the notion of relative Rota-Baxter operators on braces by using the semi-trivial action. As the underlying structures of relative Rota-Baxter operators on braces, the notion of post-braces are introduced, which can give rise to   set-theoretical solutions of the Yang-Baxter equation.

\subsection{Relative Rota-Baxter operators on braces}

Let us first recall the notion of semi-trivial actions of braces. Then we introduce the notion of relative Rota-Baxter operators on braces, which naturally split the brace structures into post-brace structures.

\begin{defi} \begin{enumerate}
\item[{\rm(i)}]
A {\bf skew brace} $(G,\cdot,\circ)$ consists of a group $(G,\cdot)$ and a group $(G,\circ)$ such that
\begin{equation}\label{eq-brace}
 a\circ  (b \cdot c)=(a \circ b)\cdot a^{-1} \cdot (a \circ c), \quad \forall a,b,c\in G.
\end{equation}
Here $a^{-1}$ is the inverse of $a$ in the group $(G,\cdot)$.
\item[{\rm(ii)}] A {\bf brace} is a skew brace in which the group $(G,\cdot)$ is abelian. In addition, a brace $(G,\cdot,\circ)$ is called {\bf two-sided} if the following condition also holds
\begin{equation}\label{eq-two-sided-brace}
(a \cdot b)\circ c=(a \circ c)\cdot c^{-1} \cdot (b \circ c), \quad \forall a,b,c\in G.
\end{equation}
\item[{\rm(iii)}]
Let $(G,\cdot_G,\circ_G)$ and $(H,\cdot_H,\circ_H)$ be braces. A {\bf homomorphism} of braces from $(G,\cdot_G,\circ_G)$ to $(H,\cdot_H,\circ_H)$ is a map $\Psi:G \to H$ such that
\begin{equation}\label{homo-brace}
\Psi(a\cdot_G b)=\Psi(a) \cdot_H \Psi(b),\quad  \Psi(a\circ_G b)=\Psi(a) \circ_H \Psi(b),\quad
\forall a,b\in G.
\end{equation}
We denote by $\Aut(G)$ the set of automorphisms of the brace $(G,\cdot_G,\circ_G)$.
\end{enumerate}
\end{defi}

\begin{pro}{\rm(\cite{Vendramin})}
If $(G,\cdot,\circ)$ is a brace, then we have
\begin{equation}\label{equality-brace}
a\circ b^{-1}=a \cdot (a\circ b)^{-1} \cdot a, \quad \forall a,b\in G.
\end{equation}
Moreover, if $(G,\cdot,\circ)$ is a two-sided brace, then we also have
\begin{equation}\label{equality-two-sided-brace}
a^{-1} \circ b=b \cdot (a\circ b)^{-1} \cdot b, \quad \forall a,b\in G.
\end{equation}
\end{pro}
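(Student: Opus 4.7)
The plan is to derive each identity directly from the corresponding defining axiom via a single well-chosen substitution, after first confirming the standard fact that the two group operations of a skew brace share the same identity element. Indeed, setting $a$ equal to the identity of $(G,\circ)$ in \eqref{eq-brace} forces it to coincide with the identity of $(G,\cdot)$; I denote both simply by $1$, so that $a\circ 1 = a = 1\circ a$ for every $a\in G$.

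For \eqref{equality-brace} I would specialize \eqref{eq-brace} by taking $c=b^{-1}$. Since $b\cdot b^{-1}=1$ and $a\circ 1 = a$, the axiom collapses to
\[
a \;=\; (a\circ b)\cdot a^{-1}\cdot (a\circ b^{-1}),
\]
and left-multiplying by the inverse of $(a\circ b)\cdot a^{-1}$ isolates the desired relation $a\circ b^{-1}=a\cdot (a\circ b)^{-1}\cdot a$. For \eqref{equality-two-sided-brace} I would apply the same strategy to \eqref{eq-two-sided-brace}, substituting $a$ and $a^{-1}$ into the first two slots while renaming the third variable to $b$; this produces
\[
b \;=\; (a\circ b)\cdot b^{-1}\cdot (a^{-1}\circ b),
\]
and solving for $a^{-1}\circ b$ yields $a^{-1}\circ b = b\cdot (a\circ b)^{-1}\cdot b$.

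There is no substantive obstacle: once the coincidence of the two group identities is in hand, both claims reduce to one-line specializations of the defining axioms followed by an inversion in the ambient group. It is worth noting that neither derivation actually invokes the abelianness of $(G,\cdot)$; that hypothesis enters only through the standing assumption that $(G,\cdot,\circ)$ is a brace (resp.\ two-sided brace), and the identical computation would carry over verbatim to the skew setting with due care for the order of factors.
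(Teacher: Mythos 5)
Your proof is correct: both identities follow exactly as you describe, by specializing \eqref{eq-brace} at $c=b^{-1}$ and \eqref{eq-two-sided-brace} at the pair $(a,a^{-1})$, after noting that the two group identities coincide. The paper itself gives no proof, citing \cite{Vendramin}, and your argument is the standard one found there; your closing observation that \eqref{equality-brace} needs no abelianness and hence holds for skew braces is also accurate.
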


\begin{defi}\label{semi-trivial-act}{\rm(\cite{SV,Cindy})}
Let $(G,\cdot_G,\circ_G)$ and $(H,\cdot_H,\circ_H)$ be braces. Then a map  $\Phi:G \to \Aut(H)$ is called a {\bf semi-trivial action} of the brace $(G,\cdot_G,\circ_G)$ on the brace $(H,\cdot_H,\circ_H)$, if $\Phi(a \circ_G b)=\Phi(a)\Phi(b)$ for all $a,b\in G$.
\end{defi}

\begin{rmk}{\rm
In this case, $\Phi:G \to \Aut(H)$ is an action of the group $(G,\circ_G)$ on the group $(H,\circ_H)$.
}\end{rmk}

\begin{pro}
Let $\Phi:G \to \Aut(H)$ be a semi-trivial action of the brace $(G,\cdot_G,\circ_G)$ on the brace $(H,\cdot_H,\circ_H)$. Then $(H \times G, \cdot_{\rtimes}, \circ_{\rtimes})$ is a brace, where $\cdot_{\rtimes}$ and $\circ_{\rtimes}$ are defined by
\begin{eqnarray}
\label{semi-direct-1} (h,a)\cdot_{\rtimes} (k,b)&=&(h \cdot_H k,a \cdot_G b);\\
\label{semi-direct-2} (h,a)\circ_{\rtimes} (k,b)&=&(h \circ_H \Phi(a)k,a \circ_G b), \quad \forall a,b\in G, h,k\in H.
\end{eqnarray}
$(H \times G, \cdot_{\rtimes}, \circ_{\rtimes})$ is called the {\bf semi-direct product}
of braces $G$ and $H$, and denoted by $H \rtimes_{\Phi} G$.
\end{pro}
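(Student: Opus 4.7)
The plan is to verify, in order, that $(H \times G, \cdot_\rtimes)$ is an abelian group, that $(H \times G, \circ_\rtimes)$ is a group, and that the pair $(\cdot_\rtimes, \circ_\rtimes)$ satisfies the brace compatibility \eqref{eq-brace}. The first point is immediate: since $(H, \cdot_H)$ and $(G, \cdot_G)$ are abelian by the definition of a brace, $(H \times G, \cdot_\rtimes)$ is simply their direct product, an abelian group with identity $(e_H, e_G)$ and componentwise inverses $(h, a)^{-1} = (h^{-1}, a^{-1})$.

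For the second point, I would observe that $\circ_\rtimes$ is nothing but the standard semi-direct product of the groups $(H, \circ_H)$ and $(G, \circ_G)$ along the action obtained by restricting $\Phi \colon G \to \Aut(H)$ to the $\circ_H$-group structure. Indeed, by Definition~\ref{semi-trivial-act} we have $\Phi(a \circ_G b) = \Phi(a)\Phi(b)$, and by \eqref{homo-brace} each $\Phi(a)$ is a group automorphism of $(H, \circ_H)$. These two facts, together with associativity of $\circ_H$ and $\circ_G$, yield associativity of $\circ_\rtimes$; the identity is $(e_H, e_G)$; and the inverse of $(h, a)$ is $(\Phi(\bar a)(\bar h), \bar a)$, where $\bar{\phantom{a}}$ denotes the inverse with respect to $\circ_H$ or $\circ_G$ as appropriate.

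The substantive step is the brace axiom. Taking $x = (h, a)$, $y = (k, b)$, $z = (l, c)$, I would first compute
\[ x \circ_\rtimes (y \cdot_\rtimes z) = \bigl(h \circ_H \Phi(a)(k \cdot_H l),\; a \circ_G (b \cdot_G c)\bigr). \]
Since $\Phi(a)$ is a brace automorphism it respects $\cdot_H$, so $\Phi(a)(k \cdot_H l) = \Phi(a)(k) \cdot_H \Phi(a)(l)$; applying \eqref{eq-brace} in $H$ to the first coordinate and in $G$ to the second then rewrites the expression as
\[ \bigl((h \circ_H \Phi(a)(k)) \cdot_H h^{-1} \cdot_H (h \circ_H \Phi(a)(l)),\; (a \circ_G b) \cdot_G a^{-1} \cdot_G (a \circ_G c)\bigr), \]
which is by inspection $(x \circ_\rtimes y) \cdot_\rtimes x^{-1} \cdot_\rtimes (x \circ_\rtimes z)$, with $x^{-1}$ taken in the abelian group $(H \times G, \cdot_\rtimes)$. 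This establishes \eqref{eq-brace} for $H \rtimes_\Phi G$.

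I do not anticipate any genuine obstacle; the argument is a routine chain of substitutions. The only point requiring care is to keep straight the three distinct uses of the semi-triviality hypothesis on $\Phi$: that $\Phi$ intertwines $\circ_G$ with composition in $\Aut(H)$ (needed for the group law $\circ_\rtimes$), that each $\Phi(a)$ preserves $\circ_H$ (needed for associativity of $\circ_\rtimes$), and that each $\Phi(a)$ preserves $\cdot_H$ (needed to distribute $\Phi(a)$ over $\cdot_H$ in the brace-axiom computation).
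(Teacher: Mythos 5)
Your proposal is correct and follows essentially the same route as the paper: identify $(H\times G,\cdot_{\rtimes})$ as the direct product of abelian groups, identify $(H\times G,\circ_{\rtimes})$ as the semi-direct product of $(H,\circ_H)$ and $(G,\circ_G)$ via $\Phi$, and verify the brace axiom by the same componentwise computation, using that each $\Phi(a)$ preserves $\cdot_H$. No gaps; the bookkeeping of the three uses of semi-triviality that you flag is exactly what the paper's proof relies on.
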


\begin{proof}
$(H \times G, \cdot_{\rtimes})$ is actually the direct product of the abelian groups $(G,\cdot_G)$ and $(H,\cdot_H)$. Since $\Phi(a\circ_G b)=\Phi(a)\Phi(b)$ and $\Phi(a)(h\circ_H k)=(\Phi(a)h)\circ_H (\Phi(a)k)$, $(H \times G, \circ_{\rtimes})$ is a group.
Moreover, we have
\begin{eqnarray*}
(h,a) \circ_{\rtimes} ((k,b)  \cdot_{\rtimes} (t,c))
&=& (h,a) \circ_{\rtimes} (k \cdot_H t, b \cdot_G c)\\
&=& \big(h \circ_{H} \Phi(a)(k \cdot_H t), a\circ_G (b \cdot_G c) \big)\\
&=& \big(h \circ_{H} (\Phi(a)k) \cdot_H (\Phi(a)t)), a\circ_G (b \cdot_G c) \big)\\
&=& \big((h \circ_{H} \Phi(a)k) \cdot_H h^{-1} \cdot_H (h \circ_{H} \Phi(a)t), (a\circ_G b) \cdot_G a^{-1} \cdot_G (a\circ_G c) \big)\\
&=& (h \circ_{H} \Phi(a)k, a \circ_G  b) \cdot_{\rtimes} (h^{-1},a^{-1}) \cdot_{\rtimes}
(h \circ_{H} \Phi(a)t, a \circ_G  c) \\
&=& \big((h,a) \circ_{\rtimes} (k,b)\big) \cdot_{\rtimes} (h,a)^{\cdot_{\rtimes} -1} \cdot_{\rtimes}
\big((h,a) \circ_{\rtimes} (t,c)\big),
\end{eqnarray*}
which implies that $(H \times G, \cdot_{\rtimes}, \circ_{\rtimes})$ is a brace.
\end{proof}

\begin{defi}\label{rRBO}
Let $\Phi:G \to \Aut(H)$ be a semi-trivial action of a brace $(G,\cdot_G,\circ_G)$ on a brace $(H,\cdot_H,\circ_H)$. A map $\huaB:H \to G$ is called a {\bf relative Rota-Baxter operator} on the brace $(G,\cdot_G,\circ_G)$ with respect to the semi-trivial action $\Phi:G \to \Aut(H)$, if
\begin{eqnarray}
\label{rRBO-1}\huaB(h) \cdot_G \huaB(k)&=&\huaB(h \cdot_H k);\\
\label{rRBO-2} \huaB(h) \circ_G \huaB(k) &=& \huaB( h \circ_H  \Phi(\huaB (h))(k)),\quad \forall h,k\in H.
\end{eqnarray}
\end{defi}

\begin{rmk}
Equation \eqref{rRBO-2} implies that $\huaB:H \to G$ is a relative Rota-Baxter operator on the group $(G,\circ_G)$ with respect to the action $\Phi$ of $G$ on $H$.
\end{rmk}

\begin{pro}
Let $\Phi:G \to \Aut(H)$ be a semi-trivial action of a brace $(G,\cdot_G,\circ_G)$ on a brace $(H,\cdot_H,\circ_H)$. Then $\huaB:H \to G$ is a relative Rota-Baxter operator on a brace $(G,\cdot_G,\circ_G)$ with respect to $\Phi$ if and only if the graph $\Gr(\huaB):=\{ (h,\huaB (h))| h \in H\}$ is a sub-brace of the semi-direct product $H \rtimes_{\Phi} G$.
\label{p:graph}
\end{pro}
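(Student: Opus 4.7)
The plan is to unwind the two group operations on $H\rtimes_\Phi G$ coordinate by coordinate and observe that the two defining identities \eqref{rRBO-1} and \eqref{rRBO-2} of a relative Rota-Baxter operator are exactly the closure conditions on $\Gr(\huaB)$ under the dot-product and circle-product inherited from the semi-direct product. Since the brace axiom \eqref{eq-brace} is already satisfied in $H\rtimes_\Phi G$, once we verify that $\Gr(\huaB)$ is a subgroup for both $\cdot_\rtimes$ and $\circ_\rtimes$, the brace structure on $\Gr(\huaB)$ is automatic.

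For the $(\Leftarrow)$ direction, I would take two elements $(h,\huaB(h))$ and $(k,\huaB(k))$ of $\Gr(\huaB)$ and compute their $\cdot_\rtimes$-product using \eqref{semi-direct-1}; it equals $(h\cdot_H k,\, \huaB(h)\cdot_G \huaB(k))$, which belongs to $\Gr(\huaB)$ precisely when $\huaB(h)\cdot_G \huaB(k)=\huaB(h\cdot_H k)$, i.e.\ when \eqref{rRBO-1} holds. Similarly, applying \eqref{semi-direct-2} to the same pair yields $(h\circ_H \Phi(\huaB(h))(k),\, \huaB(h)\circ_G \huaB(k))$, whose membership in $\Gr(\huaB)$ is exactly \eqref{rRBO-2}.

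For the $(\Rightarrow)$ direction, assume \eqref{rRBO-1} and \eqref{rRBO-2}. The closure computations above show that $\Gr(\huaB)$ is closed under both $\cdot_\rtimes$ and $\circ_\rtimes$. To finish the subgroup checks, I would first set $h=k=1_H$ in \eqref{rRBO-1} to obtain $\huaB(1_H)=1_G$, so the identity $(1_H,1_G)$ of both operations of $H\rtimes_\Phi G$ lies in $\Gr(\huaB)$. Then, setting $k=h^{-1_H}$ in \eqref{rRBO-1} yields $\huaB(h^{-1_H})=\huaB(h)^{-1_G}$, giving closure of $\Gr(\huaB)$ under $\cdot_\rtimes$-inverses; and setting $k$ equal to the $\circ_H$-inverse of $\Phi(\huaB(h))^{-1}(h)$ (so that the $\circ_H$-argument in \eqref{rRBO-2} becomes $1_H$) produces the corresponding inverse in the $\circ_\rtimes$ group.

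I do not foresee a genuine obstacle: the result is essentially a translation between an equation on a map $\huaB$ and a closure condition on its graph. The only point requiring mild care is the inverse check for the $\circ_\rtimes$ group, because the $\Phi(\huaB(h))$-twist in \eqref{semi-direct-2} means one has to choose the correct twisted preimage; but this is purely formal because $\Phi(\huaB(h))\in\Aut(H)$ is bijective on $(H,\circ_H)$.
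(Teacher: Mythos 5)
Your proposal is correct and follows essentially the same route as the paper: compute the two semi-direct-product operations on pairs $(h,\huaB(h))$, $(k,\huaB(k))$ coordinatewise and observe that membership of the results in $\Gr(\huaB)$ is precisely \eqref{rRBO-1} and \eqref{rRBO-2}. The only difference is that you spell out the identity and inverse checks (via $\huaB(e_H)=e_G$ and the twisted preimage $\Phi(\huaB(h))^{-1}(\bar h)$) that the paper leaves implicit, which is a harmless and indeed welcome addition.
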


\begin{proof}
By a direct calculation, we have
\begin{eqnarray*}
(h,\huaB (h)) \cdot_{\rtimes }(k,\huaB (k))&=& \Big(h \cdot_H k,\huaB (h) \cdot_G \huaB (k) \Big);\\
(h,\huaB (h)) \circ_{\rtimes }(k,\huaB (k))&=& \Big(h \circ_H \Phi(\huaB (h))(k),\huaB (h) \circ_G \huaB (k)\Big),
\end{eqnarray*}
for all $h,k \in H$. Then by \eqref{rRBO-1} and \eqref{rRBO-2}, it is obvious that $\huaB:H \to G$ is a relative Rota-Baxter operator on a brace $(G,\cdot_G,\circ_G)$ with respect to $\Phi$ if and only if the graph $\Gr(\huaB)$ is a sub-brace of the semi-direct product $H \rtimes_{\Phi} G$.
\end{proof}

\begin{pro}\label{descendent-brace}
Let $\huaB:H \to G$ be a relative Rota-Baxter operator on a brace $(G,\cdot_G,\circ_G)$ with respect to a semi-trivial action $\Phi:G \to \Aut(H)$. Then
$(H,\cdot_{\huaB},\circ_{\huaB})$ is a brace, called the {\bf descendent brace}, where $\cdot_{\huaB}$ and $\circ_{\huaB}$ are defined by
\begin{equation}\label{descendent}
h \cdot_{\huaB} k=h \cdot_H k, \quad h \circ_{\huaB} k=  h \circ_H  \Phi(\huaB (h))(k), \quad \forall h,k\in H.
\end{equation}
Moreover,  $\huaB:H \to G $ is a homomorphism of braces from $(H,\cdot_{\huaB},\circ_{\huaB})$ to $(G,\cdot_G,\circ_G)$.
\end{pro}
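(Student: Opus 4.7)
The cleanest route is to derive the proposition as a corollary of Proposition \ref{p:graph}. The map $\iota:H\to H\rtimes_{\Phi}G$ defined by $\iota(h)=(h,\huaB(h))$ is a bijection onto the sub-brace $\Gr(\huaB)$, so my plan is to pull the brace structure on $\Gr(\huaB)$ back through $\iota$ and recognise the pulled-back operations as $\cdot_{\huaB}$ and $\circ_{\huaB}$.

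The key computation combines the semi-direct product formulas \eqref{semi-direct-1}, \eqref{semi-direct-2} with the relative Rota-Baxter axioms \eqref{rRBO-1} and \eqref{rRBO-2}:
\begin{align*}
\iota(h)\cdot_{\rtimes}\iota(k) &= (h\cdot_H k,\,\huaB(h)\cdot_G\huaB(k)) = (h\cdot_H k,\,\huaB(h\cdot_H k)) = \iota(h\cdot_{\huaB} k),\\
\iota(h)\circ_{\rtimes}\iota(k) &= (h\circ_H \Phi(\huaB(h))(k),\,\huaB(h)\circ_G\huaB(k)) = \iota(h\circ_{\huaB} k).
\end{align*}
Thus $\iota$ intertwines $(\cdot_{\huaB},\circ_{\huaB})$ on $H$ with $(\cdot_{\rtimes},\circ_{\rtimes})$ on $\Gr(\huaB)$. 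Since $\Gr(\huaB)$ is a sub-brace of $H\rtimes_{\Phi}G$ by Proposition \ref{p:graph}, the two group laws and the brace identity \eqref{eq-brace} pull back along the bijection $\iota$ to give a brace structure on $H$, which is precisely $(H,\cdot_{\huaB},\circ_{\huaB})$.

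For the homomorphism assertion, I would factor $\huaB=\pi_G\circ\iota$, where $\pi_G:H\rtimes_{\Phi}G\to G$ denotes the second-coordinate projection. Inspection of \eqref{semi-direct-1} and \eqref{semi-direct-2} shows that $\pi_G$ is a brace homomorphism, and $\iota$ is a brace homomorphism into $H\rtimes_{\Phi}G$ by the computation above, so the composition $\huaB$ is a brace homomorphism as well. Equivalently, the two identities in \eqref{homo-brace} can be read off directly from \eqref{rRBO-1} and \eqref{rRBO-2}. There is no substantial obstacle here: the proposition is essentially a repackaging of Proposition \ref{p:graph}, the only subtlety being that transport of structure through a bijection is legitimate precisely because every brace axiom is a universally quantified equation in the two operations, hence preserved by any bijection that intertwines them.
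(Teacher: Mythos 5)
Your argument is correct, but it takes a genuinely different route from the paper's. The paper does not go through Proposition \ref{p:graph}; instead it quotes \cite[Proposition 3.5]{JSZ} to conclude that $(H,\circ_{\huaB})$ is a group (recording along the way the unit $e_H$ and the explicit inverse $\Phi(\overline{\huaB(h)})(\bar h)$, which are reused later, e.g.\ in \eqref{inverse-big2}), and then verifies the compatibility \eqref{eq-brace} for $(\cdot_{\huaB},\circ_{\huaB})$ by a short direct computation using that $\Phi(\huaB(h))$ is an automorphism of $(H,\cdot_H)$ and that $(H,\cdot_H,\circ_H)$ is a brace. Your transport-of-structure argument through $\iota(h)=(h,\huaB(h))$ is cleaner and makes the conceptual content transparent --- the descendent brace is just the graph --- and the intertwining identities you display are exactly the computations already carried out in the proof of Proposition \ref{p:graph}, so nothing new needs to be checked. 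Two caveats are worth recording. First, transporting the group axioms requires $\Gr(\huaB)$ to be a genuine subgroup of both $(H\times G,\cdot_{\rtimes})$ and $(H\times G,\circ_{\rtimes})$, i.e.\ closed under units and inverses and not merely under the two multiplications, which is all your displayed computation literally shows; this does hold, since \eqref{rRBO-1} gives $\huaB(e_H)=e_G$ and $\huaB(h^{-1})=\huaB(h)^{-1}$, and \eqref{rRBO-2} applied to $k=\Phi(\overline{\huaB(h)})(\bar h)$ gives $\huaB\big(\Phi(\overline{\huaB(h)})(\bar h)\big)=\overline{\huaB(h)}$, but you are leaning on Proposition \ref{p:graph} exactly at the point where its proof is least explicit. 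Second, your route does not produce the explicit inverse formula in $(H,\circ_{\huaB})$ that the paper's citation of \cite{JSZ} supplies and later uses. The homomorphism assertion is handled essentially identically in both proofs, so there is no issue there.
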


\begin{proof}
Since $\huaB:H \to G$ is a relative Rota-Baxter operator on the group $(G,\circ_G)$ with respect to the action $\Phi:G \to \Aut(H)$, by \cite[Proposition 3.5]{JSZ}, we deduce that $(H,\circ_{\huaB})$ is a group with the unit $e_H$ and $\Phi(\overline{\huaB (h)})(\bar{h})$ is the inverse of $h$ in $(H,\circ_{\huaB})$, where $\overline{\huaB (h)}$ and $\bar{h}$ are inverse elements of $\huaB (h)$ in the group $(G,\circ_G)$ and $h$ in the group $(H,\circ_H)$ respectively. Moreover, we have
\begin{eqnarray*}
h \circ_{\huaB} (k \cdot_{\huaB} t)
&=& h \circ_H \Phi(\huaB (h))(k \cdot_{\huaB} t)\\
&=& h \circ_H \big((\Phi(\huaB (h))k) \cdot_H (\Phi(\huaB (h))t)\big)\\
&=& \big(h \circ_H (\Phi(\huaB (h))k)\big) \cdot_Hh^{-1} \cdot_H  \big(h \circ_H (\Phi(\huaB (h))t)\big)\\
&=& (h \circ_{\huaB} k) \cdot_{\huaB} h^{-1} \cdot_{\huaB} (h \circ_{\huaB} t),
\end{eqnarray*}
which implies that $(H,\cdot_{\huaB},\circ_{\huaB})$ is a brace. It is obvious that $\huaB:H \to G $ is a homomorphism of braces from $(H,\cdot_{\huaB},\circ_{\huaB})$ to $(G,\cdot_G,\circ_G)$ in this case.
\end{proof}

\subsection{Post-braces}

In this subsection, we introduce the notion of post-braces and show that a relative Rota-Baxter operator on a brace splits the brace to a post-brace. First we recall the notion of post-groups and some basic results.

\begin{defi}{\rm(\cite{BGST})}
A {\bf post-group} is a group $(G,\circ)$ equipped with a multiplication $\rhd:G\times G\to G$ such that
\begin{enumerate}
\item[{\rm(i)}] for each $a\in G$, the left multiplication $$L^\rhd_a:G\to G, \quad L^\rhd_a b= a\rhd b, \quad \forall b\in G,$$
is an automorphism of the group $(G,\circ)$, that means,
\begin{equation}
  \label{eq-post-group-1} a\rhd (b\circ c)=(a\rhd b)\circ(a\rhd c),\quad \forall a, b, c\in G;
\end{equation}
\item[{\rm(ii)}] the following ``weighted" associativity holds,
\begin{equation}
\label{eq-post-group-2} \big(a\circ(a\rhd b)\big)\rhd c=a\rhd (b\rhd c), \quad \forall a, b, c\in G.
\end{equation}
\end{enumerate}
 \end{defi}

\begin{defi}{\rm(\cite{BGST})}
A {\bf homomorphism} of post-groups from $(G,\circ_G,\rhd_G)$ to $(H,\circ_H,\rhd_H)$ is a map $\Psi:G\to H$ that preserves the operations $\circ$ and $\rhd$:
\begin{equation*}
\Psi(a\circ_G b)=\Psi(a)\circ_H \Psi(b), \quad
\Psi(a\rhd_G b)=\Psi(a)\rhd_H \Psi(b),\quad \forall a, b\in G.
\end{equation*}
\end{defi}

\begin{rmk}{\rm
In \cite[Theorem 3.25]{BGST}, the authors proved that the category of post-groups (resp. pre-groups) is isomorphic to the category of skew braces (resp. braces).
}\end{rmk}

\begin{lem}{\rm(\cite{BGST})}
Let $e$ be the unit of the post-group $(G,\circ,\rhd)$. Then for all $a\in G$, we have
\begin{eqnarray}
\label{unit-post-brace-1} a \rhd e=e;\\
\label{unit-post-brace-2} e \rhd a=a.
\end{eqnarray}
\end{lem}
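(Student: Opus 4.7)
The plan is to establish the two identities in sequence, using axiom (i) of the post-group definition for the first and combining it with the weighted associativity (ii) for the second.

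For $a \rhd e = e$, I would argue directly from axiom (i): the left multiplication $L^\rhd_a$ is an automorphism of the group $(G,\circ)$, and every group automorphism maps the unit to the unit. Hence $a \rhd e = L^\rhd_a(e) = e$.

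For $e \rhd a = a$, the idea is to extract a functional equation for the single map $L^\rhd_e$ from \eqref{eq-post-group-2} and then invert it using bijectivity. Specifically, I would specialize $b = e$ in the weighted associativity \eqref{eq-post-group-2} to obtain
\begin{equation*}
\bigl(a \circ (a \rhd e)\bigr) \rhd c = a \rhd (e \rhd c), \quad \forall a,c \in G.
\end{equation*}
Using the first identity $a \rhd e = e$ just established, together with the fact that $e$ is the unit of $(G,\circ)$, the left-hand side collapses to $a \rhd c$, yielding the intermediate relation
\begin{equation*}
a \rhd c = a \rhd (e \rhd c), \quad \forall a,c \in G.
\end{equation*}
Specializing further to $a = e$ gives $e \rhd c = e \rhd (e \rhd c)$, that is, $L^\rhd_e \circ L^\rhd_e = L^\rhd_e$.

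To conclude, I would invoke once more axiom (i), which states that $L^\rhd_e$ is a group automorphism of $(G,\circ)$ and therefore a bijection. Composing the idempotency relation $L^\rhd_e \circ L^\rhd_e = L^\rhd_e$ with $(L^\rhd_e)^{-1}$ on either side forces $L^\rhd_e = \Id$, which is exactly \eqref{unit-post-brace-2}. There is no genuine obstacle here; the only point worth flagging is the use of the bijectivity of $L^\rhd_e$, which is built into the definition of a post-group and is what allows the idempotent self-composition to be cancelled to the identity.
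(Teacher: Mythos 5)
Your proof is correct. The paper does not prove this lemma itself (it is quoted from \cite{BGST}), so there is no in-paper argument to compare against, but your reasoning is the standard one: $a\rhd e=e$ follows from $L^\rhd_a$ being a group homomorphism, and setting $b=e$ in the weighted associativity \eqref{eq-post-group-2} gives $a\rhd c=a\rhd(e\rhd c)$, from which $e\rhd c=c$ follows. The only remark worth making is that your final step can be shortened: from $a\rhd c=a\rhd(e\rhd c)$ you may apply injectivity of $L^\rhd_a$ directly to conclude $e\rhd c=c$, rather than first specializing to $a=e$ and cancelling the idempotent $L^\rhd_e\circ L^\rhd_e=L^\rhd_e$; both routes rest on the same bijectivity hypothesis built into axiom (i), which you correctly identify as the essential ingredient.
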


\begin{thm}{\rm(\cite{BGST})}\label{thm:post-group}
Let $(G,\circ,\rhd)$ be a post-group. Define $\ast:G\times G\to G$     by
\begin{eqnarray}
\label{eq:subadj-brace-bgst}  a\ast b:=a\circ (a\rhd b), \quad \forall a,b\in G.
\end{eqnarray}
\begin{enumerate}
    \item[{\rm(i)}] Then $(G,\ast)$ is a group with $e$ being the unit, and the inverse map $\dagger:G\to G$ given by
\begin{eqnarray}\label{dagger}
 a^\dagger := (L^\rhd_a)^{-1}(\bar{a}), \quad \forall a\in G,
\end{eqnarray}
where $\bar{a}$ is the inverse of $a$ with respect to the group $(G,\circ)$. The group $(G,\ast)$ is called the {\bf sub-adjacent group} of the post-group $(G,\circ,\rhd)$.
\item[{\rm(ii)}] The left multiplication $L^\rhd:G\to \Aut(G)$ is an action of the group $(G,\ast)$  on the group $(G,\circ)$.
\item[{\rm(iii)}] Let $\Psi: (G,\circ_G,\rhd_G) \to (H,\circ_H,\rhd_H)$ be a homomorphism of post-groups. Then $\Psi$ is a homomorphism of the sub-adjacent groups from $(G,\ast_G)$ to $(H,\ast_H)$.
\end{enumerate}
\end{thm}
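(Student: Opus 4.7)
The plan is to verify the three assertions in order, using only the post-group axioms \eqref{eq-post-group-1} and \eqref{eq-post-group-2} together with the unit relations \eqref{unit-post-brace-1}--\eqref{unit-post-brace-2}.

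For part (i), I would first establish associativity of $\ast$. Unwinding $(a\ast b)\ast c$ by the definition produces the factor $\bigl(a\circ(a\rhd b)\bigr)\rhd c$, which the weighted associativity \eqref{eq-post-group-2} collapses to $a\rhd(b\rhd c)$; combining this with the homomorphism property \eqref{eq-post-group-1} of $L^\rhd_a$ then lets me regroup the expression as $a\circ\bigl(a\rhd(b\circ(b\rhd c))\bigr)=a\ast(b\ast c)$. The two-sided unit axiom for $e$ is immediate from \eqref{unit-post-brace-1}--\eqref{unit-post-brace-2}. For the inverse, the one-line computation
\[
a\ast a^\dagger \;=\; a\circ\bigl(a\rhd (L^\rhd_a)^{-1}(\bar a)\bigr) \;=\; a\circ\bar a \;=\; e
\]
exhibits $a^\dagger$ as a right $\ast$-inverse, and a standard monoid argument (apply the same identity to $a^\dagger$ in place of $a$ and sandwich with $\ast$-associativity) promotes it to a two-sided inverse, so $(G,\ast)$ is a group.

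For part (ii), the weighted associativity gives
\[
L^\rhd_{a\ast b}(c) \;=\; \bigl(a\circ(a\rhd b)\bigr)\rhd c \;=\; a\rhd(b\rhd c) \;=\; L^\rhd_a\bigl(L^\rhd_b(c)\bigr),
\]
and $L^\rhd_e=\Id$ is precisely \eqref{unit-post-brace-2}; combined with the standing hypothesis that each $L^\rhd_a$ lies in $\Aut(G,\circ)$, this exhibits $L^\rhd$ as an action of $(G,\ast)$ on $(G,\circ)$ by $\circ$-automorphisms. For part (iii), since a post-group homomorphism preserves both $\circ$ and $\rhd$, substituting into $a\ast_G b=a\circ_G(a\rhd_G b)$ yields $\Psi(a\ast_G b)=\Psi(a)\circ_H\bigl(\Psi(a)\rhd_H\Psi(b)\bigr)=\Psi(a)\ast_H\Psi(b)$.

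The calculations are essentially symbolic bookkeeping and I do not anticipate a serious obstacle. The only mildly delicate point is the inverse verification: the formula $a^\dagger := (L^\rhd_a)^{-1}(\bar a)$ is engineered precisely so that the outer action $a\rhd(\,\cdot\,)$ cancels $(L^\rhd_a)^{-1}$ and leaves the $\circ$-inverse $\bar a$, which then pairs with $a$ via $\circ$-associativity to produce $e$. One should also confirm (via the standard monoid argument mentioned above) that a right $\ast$-inverse is automatically two-sided, but this is purely formal.
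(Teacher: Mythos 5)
Your proposal is correct, and all the computations check out: associativity of $\ast$ follows from \eqref{eq-post-group-2} combined with \eqref{eq-post-group-1} exactly as you describe, the right-inverse computation $a\ast a^\dagger=a\circ\bar a=e$ is valid because each $L^\rhd_a$ is invertible (being a $\circ$-automorphism), and the monoid argument promoting a right inverse to a two-sided one is standard. Note, however, that the paper gives no proof of this theorem at all --- it is recalled verbatim from \cite{BGST} --- so there is nothing to compare against within this paper; your argument is the natural direct verification and supplies the details the paper omits by citation.
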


\begin{defi}\label{post-brace}
  A {\bf post-brace} is a quadruple $(G,\cdot,\circ,\rhd)$ such that
\begin{itemize}
\item[{\rm(i)}] $(G,\cdot,\circ)$ is a brace,
\item[{\rm(ii)}] $(G,\circ,\rhd)$ is a post-group,
\item[{\rm(iii)}] $\rhd$ and $\cdot$ satisfy the following compatibility condition:
  \begin{equation}\label{eq-post-brace}
  a \rhd (b \cdot c)=(a \rhd b)\cdot (a \rhd c),\quad \forall a,b,c\in G.
  \end{equation}
\end{itemize}

A {\bf homomorphism of post-braces} from $(G,\cdot_G,\circ_G,\rhd_G)$ to $(H,\cdot_H,\circ_H,\rhd_H)$ is a map $\Psi:G \to H$ that preserves the operations $\cdot,\circ$ and $\rhd$:
\begin{equation}
\label{Post-homo}  \Psi(a\cdot_G b)=\Psi(a)\cdot_H \Psi(b), \quad \Psi(a\circ_G b)=\Psi(a)\circ_H \Psi(b), \quad \Psi(a\rhd_G b)=\Psi(a)\rhd_H \Psi(b), \ \forall a, b\in G.
\end{equation}

\end{defi}

In addition to the brace $(G,\cdot,\circ)$, a post-brace naturally gives rise to another brace on the underlying set.

\begin{thm}\label{subadj-brace}
Let $(G,\cdot,\circ,\rhd)$ be a post-brace. Define a binary operation $\ast:G\times G\to G$  by
\begin{eqnarray}
\label{eq:subadj-brace}  a\ast b&:=&a\circ (a\rhd b), \quad \forall a,b\in G.
\end{eqnarray}
\begin{itemize}
\item[{\rm(i)}]
Then $(G,\cdot,\ast)$ is a brace with $e$ being the unit, called the {\bf sub-adjacent brace} of the post-brace $(G,\cdot,\circ,\rhd)$ and denoted by $G_{\rhd}$.
\item[{\rm(ii)}] The left multiplication
   $$ L^\rhd_a:G\to G, \quad L^\rhd_a b= a\rhd b, \quad \forall a,b\in G,$$
  is a semi-trivial action of the brace $(G,\cdot,\ast)$ on the brace $(G,\cdot,\circ)$.
\item[{\rm(iii)}]
    Let $\Psi: G \to H$ be a  homomorphism of post-braces from $(G,\cdot_G,\circ_G,\rhd_G)$ to $(H,\cdot_H,\circ_H,\rhd_H)$. Then $\Psi$ is a homomorphism of the sub-adjacent braces from $(G,\cdot_G,\ast_G)$ to $(H,\cdot_H,\ast_H)$.
\end{itemize}
\end{thm}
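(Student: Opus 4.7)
My plan is to dispatch the three items by leveraging the existing results on post-groups (Theorem \ref{thm:post-group}) so that almost nothing needs to be redone for the multiplicative group structure, and then to verify the brace-compatibility axioms using the new hypothesis \eqref{eq-post-brace}.

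For part (i), the fact that $(G,\ast)$ is a group with unit $e$ follows immediately from Theorem \ref{thm:post-group}(i) applied to the underlying post-group $(G,\circ,\rhd)$, so I only need to prove the brace identity
\[
a \ast (b \cdot c) = (a \ast b) \cdot a^{-1} \cdot (a \ast c).
\]
The strategy is to expand the left side using the definition \eqref{eq:subadj-brace}, then use the compatibility \eqref{eq-post-brace} to distribute $a \rhd (-)$ over $\cdot$, and finally apply the brace identity \eqref{eq-brace} for $(G,\cdot,\circ)$ to turn $a \circ \bigl((a \rhd b)\cdot (a \rhd c)\bigr)$ into $(a \circ (a\rhd b))\cdot a^{-1} \cdot (a\circ (a \rhd c))$, which is exactly $(a\ast b)\cdot a^{-1}\cdot (a\ast c)$. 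This is a short chain of rewrites.

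For part (ii), I would verify first that each $L^\rhd_a$ is a brace automorphism of $(G,\cdot,\circ)$: preservation of $\circ$ is \eqref{eq-post-group-1}, preservation of $\cdot$ is \eqref{eq-post-brace}, and bijectivity is part of the post-group axioms (as recalled in Theorem \ref{thm:post-group}(ii)). For the semi-trivial action condition $L^\rhd_{a\ast b} = L^\rhd_a \circ L^\rhd_b$, unwinding via \eqref{eq:subadj-brace} reduces it to $\bigl(a \circ (a\rhd b)\bigr)\rhd c = a \rhd (b \rhd c)$, which is precisely the weighted associativity \eqref{eq-post-group-2}. So there is nothing new to check here beyond recognizing that Definition \ref{semi-trivial-act} is exactly satisfied.

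For part (iii), since $\Psi$ preserves $\cdot$ by \eqref{Post-homo}, it remains to show it preserves $\ast$. A one-line computation
\[
\Psi(a \ast_G b) = \Psi\bigl(a \circ_G (a \rhd_G b)\bigr) = \Psi(a) \circ_H \bigl(\Psi(a) \rhd_H \Psi(b)\bigr) = \Psi(a) \ast_H \Psi(b)
\]
using \eqref{Post-homo} and \eqref{eq:subadj-brace} does the job. I do not anticipate any serious obstacle; the content of the theorem is essentially that the compatibility \eqref{eq-post-brace} between $\rhd$ and $\cdot$ is exactly what is required to promote the sub-adjacent group of Theorem \ref{thm:post-group} to a sub-adjacent brace, and to promote the action $L^\rhd$ to a semi-trivial action of braces. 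The main thing to be careful about is simply tracking which axiom is used at each step and making sure part (i) is stated with the operations $\cdot$ and $\ast$ in the right order, so that \eqref{eq-brace} applies literally rather than up to a swap.
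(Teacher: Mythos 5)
Your proposal is correct and follows essentially the same route as the paper: part (i) is the identical three-step rewrite via \eqref{eq-post-brace} and \eqref{eq-brace} after citing Theorem \ref{thm:post-group}(i), and parts (ii) and (iii) reduce to Theorem \ref{thm:post-group}(ii)--(iii) together with the compatibility \eqref{eq-post-brace}, exactly as in the paper. The only difference is that you unpack the action condition $L^\rhd_{a\ast b}=L^\rhd_a L^\rhd_b$ down to the weighted associativity \eqref{eq-post-group-2} where the paper simply cites the post-group result, which is a harmless elaboration.
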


\begin{proof}

 {\rm(i)}
  $(G,\ast)$ is a group according to Theorem \ref{thm:post-group} {\rm(i)}. By a direct calculation, we have
  \begin{eqnarray*}
  a \ast (b \cdot c)
  &=& a \circ (a \rhd (b \cdot c))\\
  &\overset{\eqref{eq-post-brace}}{=}& a \circ ((a \rhd b) \cdot (a \rhd c))\\
  &\overset{\eqref{eq-brace}}{=}& (a \circ (a \rhd b)) \cdot a^{-1}  \cdot (a \circ (a \rhd c))\\
  &=& (a \ast b) \cdot a^{-1}  \cdot (a \ast c),
  \end{eqnarray*}
  which implies that $(G,\cdot,\ast)$ is a skew brace.

 {\rm(ii)}
  By Theorem \ref{thm:post-group} {\rm(ii)}, if $(G,\circ,\rhd)$ is a post-group, then the left multiplication $ L^\rhd: (G,\ast) \to \Aut(G,\circ)$ is an action of the group $(G,\ast)$ on the group  $(G,\circ)$. Combined with \eqref{eq-post-brace}, we deduce that
   $L^\rhd_a$ is in $\Aut(G)$ for all $a\in G$. Thus, the left multiplication
   $L^\rhd: G \to \Aut(G)$ is a semi-trivial action of the brace $(G,\cdot,\ast)$ on the brace $(G,\cdot,\circ)$.

{\rm(iii)}
  By Theorem \ref{thm:post-group} {\rm(iii)}, if $\Psi: G \to H$ is a  homomorphism of post-groups, then $\Psi$ is a homomorphism of the groups from $(G,\ast_G)$ to $(H,\ast_H)$, which implies that $\Psi$ is a homomorphism of the sub-adjacent braces from $(G,\cdot_G,\ast_G)$ to $(H,\cdot_H,\ast_H)$.
\end{proof}

\begin{rmk}\label{another-skew-brace}
For any post-brace $(G,\cdot,\circ,\rhd)$, there exists another skew brace structure $(G,\circ,\ast)$, where $\ast$ is given by \eqref{eq:subadj-brace-bgst}.  See \cite[Proposition 3.22]{BGST}.
\end{rmk}

Now we establish the close relation between relative Rota-Baxter operators on braces and post-braces.

\begin{pro}\label{pro:rRBO-post-brace}
Let $\huaB:H \to G$ be a relative Rota-Baxter operator on a brace $(G,\cdot_G,\circ_G)$ with respect to a semi-trivial action $\Phi:(G,\cdot_G,\circ_G) \to \Aut(H,\cdot_H,\circ_H)$. Define a multiplication $\rhd_{\huaB}:H \times H \to H$ by
\begin{eqnarray}\label{rRBO-post-brace}
h \rhd_{\huaB} k= \Phi(\huaB (h))(k), \quad \forall h,k\in H.
\end{eqnarray}
Then $(H,\cdot_H,\circ_H,\rhd_{\huaB})$ is a post-brace, whose sub-adjacent brace is the descendant brace $(H,\cdot_{\huaB},\circ_{\huaB})$ given in Proposition \ref{descendent-brace}.

Conversely, let $(G,\cdot,\circ,\rhd)$ be a post-brace. Then the identity map $\Id:G \to G$ is a relative Rota-Baxter operator on the sub-adjacent brace $(G,\cdot,\ast)$ given in Theorem \ref{subadj-brace} with respect to the semi-trivial action $L^\rhd$ of the brace $(G,\cdot,\ast)$ on the brace $(G,\cdot,\circ)$.
\end{pro}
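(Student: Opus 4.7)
The plan is to split the statement into its two directions and handle each by a direct axiom check, with the main leverage coming from the two Rota-Baxter identities \eqref{rRBO-1}--\eqref{rRBO-2} and the fact that $\Phi(\huaB(h))$ is a brace automorphism of $(H,\cdot_H,\circ_H)$.

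For the forward direction I would verify, in order, the three clauses of Definition \ref{post-brace} for $(H,\cdot_H,\circ_H,\rhd_{\huaB})$. Clause (i) is free: $(H,\cdot_H,\circ_H)$ is a brace by assumption. For clause (ii), that $(H,\circ_H,\rhd_{\huaB})$ is a post-group, I first note that each left multiplication $L^{\rhd_{\huaB}}_h = \Phi(\huaB(h))$ lies in $\Aut(H)$, in particular it is a $\circ_H$-automorphism, giving \eqref{eq-post-group-1}. The weighted associativity \eqref{eq-post-group-2} is the key computation: the left-hand side equals $\Phi(\huaB(h\circ_H \Phi(\huaB(h))(k)))(t)$, which by the Rota-Baxter identity \eqref{rRBO-2} reduces to $\Phi(\huaB(h)\circ_G\huaB(k))(t)$, and by the semi-trivial action condition $\Phi(a\circ_G b)=\Phi(a)\Phi(b)$ becomes $\Phi(\huaB(h))\Phi(\huaB(k))(t)$, which matches the right-hand side. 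For clause (iii), the compatibility \eqref{eq-post-brace} is exactly the statement that $\Phi(\huaB(h))$ is a $\cdot_H$-homomorphism, which holds because $\Phi$ takes values in $\Aut(H)$. Finally, to identify the sub-adjacent brace with the descendent brace, I would just unfold definitions: by \eqref{eq:subadj-brace} and \eqref{rRBO-post-brace},
\[
h \ast k = h \circ_H (h \rhd_{\huaB} k) = h\circ_H \Phi(\huaB(h))(k),
\]
which is precisely $h \circ_{\huaB} k$ from \eqref{descendent}, while the $\cdot$ operation is $\cdot_H=\cdot_{\huaB}$ on both sides.

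For the converse, I take a post-brace $(G,\cdot,\circ,\rhd)$ and appeal to Theorem \ref{subadj-brace} to obtain the sub-adjacent brace $(G,\cdot,\ast)$ together with the semi-trivial action $L^{\rhd}\colon (G,\cdot,\ast)\to\Aut(G,\cdot,\circ)$. Then I check the two Rota-Baxter axioms of Definition \ref{rRBO} for $\huaB=\Id$: axiom \eqref{rRBO-1} is the tautology $h\cdot k=h\cdot k$, and axiom \eqref{rRBO-2} reads $h\ast k = h \circ L^{\rhd}_h(k) = h\circ(h\rhd k)$, which is precisely the definition \eqref{eq:subadj-brace} of $\ast$.

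I do not expect serious obstacles: almost every step is an unwinding of definitions. The only computation with any content is the weighted associativity in clause (ii), and even there the crucial cancellation is the substitution chain $\huaB(h\circ_H \Phi(\huaB(h))(k))=\huaB(h)\circ_G\huaB(k)$ followed by the multiplicativity of $\Phi$ along $\circ_G$, both of which are directly available from the hypotheses.
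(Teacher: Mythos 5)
Your proposal is correct and follows essentially the same route as the paper: the paper also reduces clause (iii) to the fact that $\Phi(\huaB(h))$ is a $\cdot_H$-homomorphism, identifies the sub-adjacent brace with the descendent brace by the same one-line unfolding, and verifies the converse by the same two tautological identities for $\Id$. The only difference is that the paper obtains the post-group clause (ii) by citing \cite[Theorem 3.3]{BGST}, whereas you verify the weighted associativity directly via \eqref{rRBO-2} and the multiplicativity of $\Phi$ along $\circ_G$ --- your computation is exactly the content of that cited result in this setting, so the argument is sound and merely more self-contained.
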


\begin{proof}
By \cite[Theorem 3.3]{BGST}, since $\huaB:H \to G$ be a relative Rota-Baxter operator on the group $(G,\circ_G)$ with respect to the action $\Phi$ of $(G,\circ_G)$ on $(H,\circ_H)$, then $(H,\circ_H,\rhd_{\huaB})$ is a post-group. By \eqref{rRBO-post-brace}, for all $h,k,t\in H$, we have
\begin{eqnarray*}
h \rhd_{\huaB} (k \cdot_H t)=\Phi(\huaB (h))(k \cdot_H t)=\Phi(\huaB (h))(k) \cdot_H \Phi(\huaB (h))(t)=(h \rhd_{\huaB} k) \cdot_H (h \rhd_{\huaB} t),
\end{eqnarray*}
which implies that $(H,\cdot_H,\circ_H,\rhd_{\huaB})$ is a post-brace. Moreover, by \eqref{descendent}, we have
$$ h \circ_{\huaB} k= h \circ_H  (\Phi(\huaB (h))k)=h \circ_H (h \rhd_{\huaB} k),\quad \forall  h,k\in H.   $$
Thus, $(H,\cdot_{\huaB},\circ_{\huaB})$ is the sub-adjacent brace of the post-brace $(H,\cdot_H,\circ_H,\rhd_{\huaB})$.

Now let $(G,\cdot,\circ,\rhd)$ be a post-brace. By \eqref{eq:subadj-brace}, we have
\begin{eqnarray*}
\Id(a) \cdot \Id(b)&=&\Id (a \cdot b);\\
\Id(a) \ast \Id(b)&=&\Id (a \circ  (L^\rhd_{\Id (a)} b)), \quad \forall a,b\in G,
\end{eqnarray*}
which implies that $\Id:G \to G$ is a relative Rota-Baxter operator on the brace $(G,\cdot,\ast)$ with respect to the semi-trivial action $L^\rhd$ on the brace $(G,\cdot,\circ)$.
\end{proof}

\subsection{Post-braces and the Yang-Baxter equation}

In this subsection, we prove that a post-brace provides two solutions of the Yang-Baxter equation that are Drinfel'd isomorphic. First we recall the notion of set-theoretical solutions of the Yang-Baxter equation.

\begin{defi}
Let $X$ be a set. A set-theoretical solution of the {\bf Yang-Baxter equation} on
$X$ is a bijective map $R:X\times X\to X\times X$ satisfying:
\begin{eqnarray}
R_{12}R_{23}R_{12}=R_{23}R_{12}R_{23},\,\,\,\,\text{where}\,\,\,\,R_{12}=R\times\Id_X,~R_{23}=\Id_X\times R.
\end{eqnarray}
A set $X$ with a set-theoretical solution of
the Yang-Baxter equation on $X$ is called a {\bf braided set} and
is denoted   by $(X,R)$.

Moreover, we denote by $R(a,b)=(\varphi_{a}(b),\psi_{b}(a))$ for all $a,b\in X$.
$R$  is called {\bf non-degenerate} if for all $a,b\in X$, the maps $\varphi_{a}$ and $\psi_{b}$ are bijective. $R$  is called {\bf involutive} if $R^2=\Id_{X\times X}$.
\end{defi}

\begin{defi}\label{equ-solution}
Let $(X,R)$ and $(Y,R')$ be two braided sets. A {\bf homomorphism} of braided sets from $(X,R)$ to $(Y,R')$ is a map $f:X \to Y$ such that $(f \times f)R=R' (f \times f)$.

If $f$ is bijective, then $f$ is called an {equivalence} of braided sets from $(X,R)$ to $(Y,R')$.
\end{defi}

Let us recall the following theorem, which implies that (skew) braces give rise to set-theoretical solutions of the Yang-Baxter equation. See \cite{ESS,GV,Soloviev,Vendramin} for more information.

\begin{thm}\label{skew-brace-YBE}{\rm(\cite{Rump})}
Let $(G,\cdot,\circ)$ be a brace. Then for $a,b\in G$, the map
\begin{eqnarray}\label{eq-skewbrace-YBE}
R_G:G \times G \to G \times G,\quad R_G(a,b)=(a^{-1}\cdot(a \circ b),\overline{a^{-1}\cdot(a \circ b)}\circ a \circ b),
\end{eqnarray}
is a non-degenerate involutive solution of the Yang-Baxter equation on the set $G$. Here $a^{-1}$ and $\bar{a}$ are inverse element of $a$ in the groups $(G,\cdot)$ and $(G,\circ)$ respectively.
\end{thm}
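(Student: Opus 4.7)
The plan is to work via the associated \emph{lambda action} of $(G,\circ)$ on $(G,\cdot)$. Set $\lambda_a(b):=a^{-1}\cdot(a\circ b)$, so that $R_G(a,b)=(\lambda_a(b),\,\overline{\lambda_a(b)}\circ a\circ b)$. The first step is to extract from the brace axiom \eqref{eq-brace} two facts: (a) each $\lambda_a$ is an automorphism of $(G,\cdot)$, i.e., $\lambda_a(b\cdot c)=\lambda_a(b)\cdot\lambda_a(c)$, and (b) $\lambda_{a\circ c}=\lambda_a\circ\lambda_c$, so that $\lambda:(G,\circ)\to\Aut(G,\cdot)$ is a group homomorphism. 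Both are short calculations directly from \eqref{eq-brace} once one rewrites it in the equivalent form $a\circ b=a\cdot\lambda_a(b)$.

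Next, I record the \emph{factorization identity} $a\circ b=\lambda_a(b)\circ\psi_b(a)$, where $\psi_b(a):=\overline{\lambda_a(b)}\circ a\circ b$ is the second coordinate of $R_G(a,b)$; this is immediate from the definitions. Involutivity follows by computing $R_G(\lambda_a(b),\psi_b(a))$: using (b) to rewrite $\lambda_{\lambda_a(b)}(\psi_b(a))$ and then cancelling via the factorization returns $a$, while the second coordinate collapses to $b$ by the same cancellation. Non-degeneracy of $\varphi_a=\lambda_a$ is immediate since $\lambda_a\in\Aut(G,\cdot)$, and non-degeneracy of $\psi_b$ then follows from involutivity.

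The main step is the Yang-Baxter identity itself. The cleanest route is to apply both $R_{12}R_{23}R_{12}$ and $R_{23}R_{12}R_{23}$ to a generic triple $(a,b,c)\in G^3$ and compare coordinate-wise. Iterated application of the factorization $x\circ y=\lambda_x(y)\circ\psi_y(x)$ together with the cocycle property (b) lets one reduce each three-fold application of $R_G$ to a canonical expression in the $\lambda$-conjugates of $a,b,c$ and a single product in $(G,\circ)$; the two sides then match because of the associativity of $\circ$. The main obstacle is precisely the bookkeeping in this three-fold composition: intermediate expressions proliferate $\lambda$-conjugates, and one must be methodical in applying (b) and the factorization to keep both sides in a comparable canonical form. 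A structural alternative, which sidesteps the direct calculation, is to identify $R_G$ with the flip on $G\times G$ induced by the bijective $1$-cocycle description of braces; in that language the Yang-Baxter equation becomes a direct consequence of associativity of $\circ$, and the proof reduces to verifying that the two factorizations of the triple product $a\circ b\circ c$ agree.
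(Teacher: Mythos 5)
The paper does not actually prove this statement: Theorem \ref{skew-brace-YBE} is recalled from Rump's work with a citation, so there is no in-paper argument to compare yours against. Your $\lambda$-map framework is the standard route, and several of your steps are correct as sketched: (a) each $\lambda_a$ is an endomorphism of $(G,\cdot)$ directly from \eqref{eq-brace}, and it is bijective because (b) gives $\lambda_a\lambda_{\bar a}=\lambda_{a\circ\bar a}=\Id$; the factorization $a\circ b=\varphi_a(b)\circ\psi_b(a)$ is immediate; and involutivity follows from $\lambda_{\varphi_a(b)}(\psi_b(a))=\varphi_a(b)^{-1}\cdot(a\circ b)=a$, where the last cancellation uses commutativity of $(G,\cdot)$ in an essential way (this is exactly where the skew-brace solution fails to be involutive, so it deserves to be flagged rather than absorbed into ``cancelling via the factorization'').

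There are two genuine gaps. First, the claim that ``non-degeneracy of $\psi_b$ then follows from involutivity'' is not a valid inference: involutivity together with bijectivity of all $\varphi_a$ does not formally imply bijectivity of all $\psi_b$. This is precisely the subtlety behind Rump's cycle-set formalism, where only \emph{finite} left non-degenerate involutive solutions are automatically right non-degenerate, and that itself is a nontrivial theorem; since $G$ may be infinite, you cannot invoke such a principle. Right non-degeneracy must instead be proved from the brace structure, e.g.\ by deriving $\psi_b(a)=\lambda_{\lambda_a(b)}^{-1}(a)$ from involutivity and then exhibiting an explicit inverse of $a\mapsto\psi_b(a)$ using the two group operations (in the radical-ring model $a\circ b=a+b+ab$ one computes $\psi_b(a)=(1+b+ab)^{-1}a$, whose inverse $c\mapsto(1+b)c(1-bc)^{-1}$ shows that a genuine invertibility property of the brace is being used). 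Second, the Yang--Baxter identity itself is only a plan in your write-up: the direct three-fold computation is deferred to ``bookkeeping,'' and the structural alternative via the bijective $1$-cocycle/braided-group picture is the right idea (it is the Lu--Yan--Zhu and Etingof--Schedler--Soloviev mechanism, resting on $a\circ b=\varphi_a(b)\circ\psi_b(a)$ plus the compatibility relations between $\varphi$ and $\psi$), but those compatibility relations are exactly what has to be verified, so the central identity remains unestablished as written.
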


Next we recall the notion of derived solutions and Drinfel'd homomorphisms between solutions of the Yang-Baxter equation.

\begin{defi}{\rm(\cite{DRS,Soloviev})}
Let $R(a,b)=(\varphi_{a}(b),\psi_{b}(a))$ be a left non-degenerate solution of the Yang-Baxter equation on the set $X$. Then
\begin{eqnarray}\label{eq-derived-solution}
R^d:X \times X \to X \times X, \quad R^d(a,b)=(\varphi_a \psi_{\varphi_b^{-1} (a)}(b),a), \quad \forall a,b \in X
\end{eqnarray}
is also a solution, which is called the {\bf derived solution} of $R$.
\end{defi}

\begin{defi}{\rm(\cite{DRS})}
Let $(X,R)$ and $(Y,R')$ be two braided sets. A map $\omega:X \times X \to Y \times Y$ is called a {\bf Drinfel'd homomorphism} if
$$\omega~R=R'~\omega.$$
If $\omega$ is a bijection, $\omega$ is called a {\bf Drinfel'd isomorphism}. In this case, $(X,R)$ and $(Y,R')$ are called {\bf Drinfel'd isomorphic} via $\omega$ and denoted by $R \cong_{D} R'.$
\end{defi}

\begin{rmk}
The notion of Drinfel'd homomorphisms is weaker than the notion of homomorphisms of solutions introduced in Definition \ref{equ-solution}, in the sense that, if $(X,R)$ and $(Y,R')$ are homomorphic (resp. equivalent) via $f:X \to Y$, then they are Drinfel'd homomorphic (resp. Drinfel'd isomorphic) via $\omega=f \times f$. See \cite{DRS} for an example that the notion is strictly weaker as well as more details on Drinfel'd homomorphisms.
\end{rmk}

Next we consider the derived solution of $R_G$, which is induced by a brace $(G,\cdot,\circ)$ introduced by Theorem \ref{skew-brace-YBE}.

\begin{pro}{\rm (\cite{DRS})}
Let $(G,\cdot,\circ)$ be a brace and $R_G$ the induced non-degenerate solution of the Yang-Baxter equation on the set $G$ introduced by \eqref{eq-skewbrace-YBE}. Then the derived solution $R_G^d$ of $R_G$ takes the simple form as follows
$$ R_G^d(a,b)=(b,a ), \quad \forall a,b\in G. $$
\end{pro}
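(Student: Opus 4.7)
The claim is purely formal and should follow from the involutivity of $R_G$ (already recorded in Theorem\,\ref{skew-brace-YBE}) rather than from any extra brace-specific computation. Writing $R_G(a,b)=(\varphi_a(b),\psi_b(a))$ with $\varphi_a(b)=a^{-1}\cdot(a\circ b)$ and $\psi_b(a)=\overline{a^{-1}\cdot(a\circ b)}\circ a\circ b$, the definition of the derived solution \eqref{eq-derived-solution} gives
\[
R_G^d(a,b)=\bigl(\varphi_a\psi_{\varphi_b^{-1}(a)}(b),\;a\bigr),
\]
so the content of the proposition is the single identity $\varphi_a\psi_{\varphi_b^{-1}(a)}(b)=b$. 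The plan is to read this identity off from $R_G\circ R_G=\Id_{G\times G}$.

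First I would set $c:=\varphi_b^{-1}(a)\in G$, which exists and is unique because $R_G$ is non-degenerate. By construction $\varphi_b(c)=a$, and hence
\[
R_G(b,c)=\bigl(\varphi_b(c),\psi_c(b)\bigr)=\bigl(a,\psi_c(b)\bigr).
\]
Applying $R_G$ once more and using $R_G^2=\Id$ (Theorem\,\ref{skew-brace-YBE}) yields
\[
(b,c)=R_G\bigl(a,\psi_c(b)\bigr)=\bigl(\varphi_a(\psi_c(b)),\;\psi_{\psi_c(b)}(a)\bigr).
\]
Comparing first coordinates gives $\varphi_a(\psi_c(b))=b$, i.e.\ precisely $\varphi_a\psi_{\varphi_b^{-1}(a)}(b)=b$. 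Substituting back into \eqref{eq-derived-solution} one obtains $R_G^d(a,b)=(b,a)$, as desired.

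There is no real obstacle: the argument is a two-line reduction to involutivity, and no explicit manipulation of the brace operations $\cdot,\circ$ is required beyond what is already packaged in Theorem\,\ref{skew-brace-YBE}. If one preferred a self-contained derivation, one could instead compute $c=\bar b\circ(b\cdot a)$ directly from $\varphi_b(c)=a$ and substitute into $\psi_c(b)=\overline{c^{-1}\cdot(c\circ b)}\circ c\circ b$, but this route is strictly more laborious and produces the same conclusion.
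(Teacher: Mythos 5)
Your argument is correct. The paper itself supplies no proof of this proposition --- it is quoted from \cite{DRS} --- so there is nothing internal to compare against, but your reduction is exactly the right one: setting $c=\varphi_b^{-1}(a)$ (legitimate by non-degeneracy), the chain $R_G(b,c)=(a,\psi_c(b))$ followed by $R_G^2=\Id$ forces $\varphi_a(\psi_c(b))=b$, which is precisely the first component of $R_G^d(a,b)$ as defined in \eqref{eq-derived-solution}. Note that your proof never touches the operations $\cdot$ and $\circ$, so it actually establishes the more general fact that the derived solution of \emph{any} non-degenerate involutive solution is the flip map; the brace only enters through Theorem~\ref{skew-brace-YBE}, which guarantees that $R_G$ has these two properties. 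This is a genuine simplification over the ``self-contained'' route you mention at the end (computing $c=\bar b\circ(b\cdot a)$ and expanding $\psi_c(b)$ explicitly), and it is consistent with how the paper later uses the result in the proof of Theorem~\ref{post-brace-YBE}, where only involutivity and non-degeneracy are invoked.
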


A post-brace $(G,\cdot,\circ,\rhd)$ contains two braces by Theorem \ref{subadj-brace}, which gives rise to two non-degenerate solutions of the Yang-Baxter equation. Next we show that these two solutions are actually Drinfel'd isomorphic.

\begin{thm}\label{post-brace-YBE}
Let $(G,\cdot,\circ,\rhd)$ be a post-brace. Then for $a,b\in G$,
\begin{eqnarray}\label{solution-1}
R_1(a,b)=(a^{-1}\cdot(a \circ b), \overline{a^{-1}\cdot(a \circ b)} \circ a \circ b),
\end{eqnarray}
and
\begin{eqnarray}\label{solution-2}
R_2(a,b)=(a^{-1}\cdot(a \ast b), (a^{-1}\cdot(a \ast b))^\dagger \ast a \ast b),
\end{eqnarray}
are two non-degenerate involutive solutions of the Yang-Baxter equation on the set $G$.
Here $a^\dagger$ is the inverse element of $a$ with respect to the group $(G,\ast)$ given by \eqref{dagger}.
Moreover, $R_1$ is Drinfel'd isomorphic to $R_2$ via the map $\overline{\omega}:G \times G \to G \times G$ given by
$$ \overline{\omega}(a,b)=\big(a,(L^\blacktriangleright_a)^{-1}(a^{-1}\cdot (a \circ b)) \big),\quad \forall a,b \in G,$$
where $\blacktriangleright$ is the $\lambda$-map of the brace $(G,\cdot,\ast)$ defined by $a \blacktriangleright b=a^{-1} \cdot (a \ast b)$.
\end{thm}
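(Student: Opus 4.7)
The plan is to deduce the two solutions almost for free from Theorem \ref{skew-brace-YBE} applied to the two braces living inside a post-brace, and then to verify the Drinfel'd intertwining relation by a careful manipulation that exploits the defining identity $a\ast b=a\circ(a\rhd b)$.

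First, I would note that $(G,\cdot,\circ)$ is a brace by hypothesis, and $(G,\cdot,\ast)$ is its sub-adjacent brace from Theorem \ref{subadj-brace}. Applying Theorem \ref{skew-brace-YBE} to each of these two braces gives at once that $R_{1}$ and $R_{2}$, defined by \eqref{solution-1} and \eqref{solution-2}, are non-degenerate involutive set-theoretical solutions of the Yang-Baxter equation on $G$. So the first assertion requires no new work.

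For the Drinfel'd isomorphism, I first need to check that $\overline{\omega}$ is a bijection. Since $(G,\cdot,\ast)$ is a brace, its $\lambda$-map $L^{\blacktriangleright}_{a}(b)=a^{-1}\cdot(a\ast b)$ is a well-known automorphism of the abelian group $(G,\cdot)$, and in particular bijective. Then $\overline{\omega}$ is the composition of two obvious bijections: on the first coordinate it is the identity, and on the second coordinate it is $(L^{\blacktriangleright}_{a})^{-1}$ applied after the invertible map $b\mapsto a^{-1}\cdot(a\circ b)$ (invertibility here follows because $L^{\rhd}_{a}$, where $a\rhd b=a^{-1}\cdot(a\circ b)$ is the $\lambda$-map of the brace $(G,\cdot,\circ)$, is also an automorphism of $(G,\cdot)$).

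The main step is to verify $\overline{\omega}\,R_{1}=R_{2}\,\overline{\omega}$. Writing $(u,v):=R_{1}(a,b)$, one has the key cancellation $u\circ v=a\circ b$, so that
\[
\overline{\omega}(R_{1}(a,b))=\bigl(u,\,(L^{\blacktriangleright}_{u})^{-1}(u^{-1}\cdot(a\circ b))\bigr).
\]
On the other hand, writing $w$ for the second coordinate of $\overline{\omega}(a,b)$, the definition of $w$ gives $a^{-1}\cdot(a\ast w)=a^{-1}\cdot(a\circ b)$, hence $a\ast w=a\circ b$. Therefore the first coordinate of $R_{2}(a,w)$ equals $a^{-1}\cdot(a\circ b)=u$, while the second coordinate is $u^{\dagger}\ast(a\circ b)$. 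Applying $L^{\blacktriangleright}_{u}$ to this element, using associativity of $\ast$ and $u\ast u^{\dagger}=e$ together with the identity $e\ast x=x$ (which follows from \eqref{unit-post-brace-2} and $e\ast x=e\circ(e\rhd x)$), one gets exactly $u^{-1}\cdot(a\circ b)$. This shows that the second coordinate of $R_{2}(\overline{\omega}(a,b))$ agrees with that of $\overline{\omega}(R_{1}(a,b))$.

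The genuinely delicate part of the calculation is the identification $a\ast w=a\circ b$, which is what forces the somewhat opaque definition of $\overline{\omega}$: one is essentially inverting the relation $\ast=\circ\circ(\mathrm{id}\times\rhd)$ along the second slot so that both solutions see the same first output. Once that observation is in place, the remaining manipulations are a short chain of algebraic identities in the two braces, and no appeal to \eqref{eq-post-brace} or to Equation \eqref{equality-brace} is required beyond the standard structural facts about braces and sub-adjacent braces already collected in the previous subsections.
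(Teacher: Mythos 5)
Your proposal is correct, and the first half (deducing that $R_1$ and $R_2$ are non-degenerate involutive solutions by applying Theorem \ref{skew-brace-YBE} to the braces $(G,\cdot,\circ)$ and $(G,\cdot,\ast)$) is exactly what the paper does. Where you diverge is in establishing the intertwining relation $\overline{\omega}\,R_1=R_2\,\overline{\omega}$. The paper does not verify this directly: it invokes a lemma of Doikou--Rybolowicz--Stefanelli asserting that each brace-induced solution is Drinfel'd isomorphic to its derived solution, which here is the flip $\tau(a,b)=(b,a)$ in both cases, via $\omega_1(a,b)=(a^{-1}\cdot(a\circ b),a)$ and $\omega_2(a,b)=(a\blacktriangleright b,a)$; the map $\overline{\omega}=\omega_2^{-1}\omega_1$ then automatically conjugates $R_1$ into $R_2$, and unwinding $\omega_2^{-1}$ yields precisely the stated formula for $\overline{\omega}$. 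Your route instead verifies the relation by hand: the cancellation $u\circ v=a\circ b$ for $(u,v)=R_1(a,b)$, the identification $a\ast w=a\circ b$ for $w$ the second slot of $\overline{\omega}(a,b)$, and the chain $L^{\blacktriangleright}_u\bigl(u^{\dagger}\ast(a\circ b)\bigr)=u^{-1}\cdot\bigl((u\ast u^{\dagger})\ast(a\circ b)\bigr)=u^{-1}\cdot(a\circ b)$ are all correct, as is your bijectivity argument for $\overline{\omega}$ (a triangular map whose second component is a composition of the two $\lambda$-map automorphisms of $(G,\cdot)$). What the paper's approach buys is brevity and a conceptual explanation of where $\overline{\omega}$ comes from (both solutions share the flip as their common derived solution); what your approach buys is self-containedness, since it avoids the external citation and in effect reproves the special case of that lemma needed here. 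Both are valid proofs of the theorem.
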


\begin{proof}
Let $(G,\cdot,\circ,\rhd)$ be a post-brace. By Theorem \ref{subadj-brace}, $(G,\cdot,\circ)$ and $(G,\cdot,\ast)$ are braces. Then by Theorem \ref{skew-brace-YBE}, $R_1$ and $R_2$ are non-degenerate solutions of the Yang-Baxter equation on the set $G$. By \cite[Lemma 2.12]{DRS},
$R_1$ is Drinfel'd isomorphic to its derived solution $\tau$, where $\tau$ is the flip map defined by $\tau(a,b)=(b,a)$ for all $a,b\in G$. More precisely, there exists a bijective map $\omega_1:G \times G \to G \times G$ given by $\omega_1(a,b)=(a^{-1} \cdot (a \circ b),a)$ such that $\omega_1 R_1=\tau \omega_1$. Applying the same method, there exists another bijective map
$\omega_2:G \times G \to G \times G$ given by $\omega_2(a,b)=(a \blacktriangleright b,a)$ such that $\omega_2 R_2=\tau \omega_2$.

Therefore, there exists a map $\overline{\omega}:=\omega_2^{-1} \omega_1$ such that $\overline{\omega} R_1=R_2 \overline{\omega}$, which implies that $R_1$ are Drinfel'd isomorphic to $R_2$ via the map $\overline{\omega}$.
\end{proof}

Since a relative Rota-Baxter operator on a brace naturally induces a post-brace, it naturally provides two Drinfel'd isomorphic solutions of the Yang-Baxter equation as follows.

\begin{cor}
Let $\huaB:H \to G$ be a relative Rota-Baxter operator on a brace $(G,\cdot_G,\circ_G)$ with respect to a semi-trivial action $\Phi:(G,\cdot_G,\circ_G) \to \Aut(H,\cdot_H,\circ_H)$. Then $R,R^{\huaB}:H\times H \to H \times H$ defined by
$$ R (h,k)=(h^{-1} \cdot_H (h \circ_H k), \overline{(h^{-1} \cdot_H (h \circ_H k))} \circ_{H} h \circ_{H} k ), \quad \forall h,k\in H, $$
and
$$ R^{\huaB}(h,k)=(h^{-1} \cdot_H (h \circ_{\huaB} k), (h^{-1} \cdot_H (h \circ_{\huaB} k))^\dagger \circ_{\huaB} h \circ_{\huaB} k ), \quad \forall h,k\in H, $$
are two Drinfel'd isomorphic solutions of the Yang-Baxter equation on the set $H$. Here $\overline{a}$ is the inverse element of $a$ with respect to the group $(H,\circ_{H})$, and $a^\dagger$ is the inverse element of $a$ with respect to the group $(H,\circ_{\huaB})$ given by \eqref{descendent}.
\end{cor}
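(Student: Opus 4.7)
The plan is to reduce this corollary to \textbf{Theorem \ref{post-brace-YBE}} by way of \textbf{Proposition \ref{pro:rRBO-post-brace}}. First, I invoke Proposition \ref{pro:rRBO-post-brace} to transport the relative Rota-Baxter operator $\huaB:H\to G$ into a post-brace structure $(H,\cdot_H,\circ_H,\rhd_{\huaB})$ on the underlying set $H$, where $h\rhd_{\huaB}k=\Phi(\huaB(h))(k)$. This same proposition identifies the sub-adjacent brace of this post-brace as the descendent brace $(H,\cdot_{\huaB},\circ_{\huaB})=(H,\cdot_H,\circ_{\huaB})$ of Proposition \ref{descendent-brace}.

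Next, I apply Theorem \ref{post-brace-YBE} to the post-brace $(H,\cdot_H,\circ_H,\rhd_{\huaB})$. This immediately produces two non-degenerate involutive solutions of the Yang-Baxter equation on $H$: one built from the brace $(H,\cdot_H,\circ_H)$, which is exactly the map $R$ in the statement; and one built from the sub-adjacent brace $(H,\cdot_H,\circ_{\huaB})$, which is exactly $R^{\huaB}$ once one observes that the $\ast$-operation in the general theorem corresponds here to $\circ_{\huaB}$, and the $\dagger$-inverse with respect to $\ast$ corresponds to the inverse with respect to $\circ_{\huaB}$. The Drinfel'd isomorphism $\overline{\omega}=\omega_2^{-1}\omega_1$ between $R_1$ and $R_2$ supplied by Theorem \ref{post-brace-YBE} becomes a Drinfel'd isomorphism between $R$ and $R^{\huaB}$, where the role of the $\lambda$-map $\blacktriangleright$ is now played by $a\blacktriangleright b=a^{-1}\cdot_H(a\circ_{\huaB}b)$.

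In short, the proof is a one-line appeal once the translation is made: the corollary is the specialization of Theorem \ref{post-brace-YBE} to the particular post-brace canonically associated to $\huaB$ via Proposition \ref{pro:rRBO-post-brace}. The only verification needed is the bookkeeping that the symbols in Theorem \ref{post-brace-YBE} match the symbols in the corollary's formulas; since $(H,\cdot_H,\circ_H)$ is literally a brace, $R$ is the Rump solution \eqref{eq-skewbrace-YBE} for it, and $R^{\huaB}$ is the Rump solution \eqref{eq-skewbrace-YBE} for $(H,\cdot_H,\circ_{\huaB})$, this bookkeeping is automatic.

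The main (and essentially only) obstacle is notational alignment: checking that $\overline{a^{-1}\cdot_H(a\circ_H b)}$ in $R$ really denotes the $\circ_H$-inverse as stated, and that the ``$\dagger$'' in $R^{\huaB}$ is precisely the inverse operation in the group $(H,\circ_{\huaB})$ as supplied by Theorem \ref{thm:post-group}(i). No further computation is required, and the explicit form of the Drinfel'd isomorphism can either be left implicit, or written out by substituting $\circ_{\huaB}$ for $\ast$ in the formula for $\overline{\omega}$ given at the end of Theorem \ref{post-brace-YBE}.
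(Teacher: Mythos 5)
Your proposal is correct and is exactly the route the paper takes: the corollary is stated as an immediate consequence of Proposition \ref{pro:rRBO-post-brace} (which produces the post-brace $(H,\cdot_H,\circ_H,\rhd_{\huaB})$ with sub-adjacent brace $(H,\cdot_H,\circ_{\huaB})$) combined with Theorem \ref{post-brace-YBE}, and the paper gives no further argument beyond this specialization. Your bookkeeping that $\ast=\circ_{\huaB}$ and that $\dagger$ is the $\circ_{\huaB}$-inverse is the only content needed, and you have it right.
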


\section{Enhanced relative Rota-Baxter operators and matched pairs of braces}\label{sec:WRBO-MP-brace}

In this section, we introduce the notion of enhanced relative Rota-Baxter operators on braces, as a strengthened form of relative Rota-Baxter operators. We show that an enhanced relative Rota-Baxter operator on a brace give rise to a  factorization of brace. Moreover, we also introduce the notion of matched pairs of braces and show that an enhanced relative Rota-Baxter operator on a brace gives rise to a matched pair of braces.

\begin{defi}
Let $\Phi:G \to \Aut(H)$ be a semi-trivial action of the brace $(G,\cdot_G,\circ_G)$ on the brace $(H,\cdot_H,\circ_H)$. Then a map $\huaB:H \to G$ is called an {\bf enhanced relative Rota-Baxter operator} on the brace $(G,\cdot_G,\circ_G)$ with respect to the semi-trivial action $\Phi:G \to \Aut(H)$ if
\begin{eqnarray}
\label{eh-rRBO-1}\huaB(h) \cdot_G \huaB(k)&=&\huaB(h \cdot_H k);\\
\label{eh-rRBO-2}(\huaB(h) \cdot_G a) \circ_G \huaB(k) &=& \huaB \Big( h\circ_H  \Phi(\huaB (h) \cdot_G a)k \Big) \cdot_G a,
\end{eqnarray}
for all $a\in G,h,k\in H$.
\end{defi}

\begin{rmk}
When we choose $a$ to be the unit $e_G$ of the brace $(G,\cdot_G,\circ_G)$, then \eqref{eh-rRBO-2} degenerates to \eqref{rRBO-2}, which implies that an enhanced relative Rota-Baxter operator is a special relative Rota-Baxter operator.
\end{rmk}

\begin{lem}
Let $\huaB:H\to G$ be an enhanced relative Rota-Baxter operator on a brace $(G,\cdot_G,\circ_G)$ with respect to a semi-trivial action $\Phi:G \to \Aut(H)$. Then for all $a\in G, k\in H$, we have
\begin{equation}\label{enhance-property}
a \circ_G \huaB(k)=\huaB(\Phi(a)k)\cdot_G a.
\end{equation}
\end{lem}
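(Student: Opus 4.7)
The plan is to derive the identity by specializing the defining equation \eqref{eh-rRBO-2} of an enhanced relative Rota-Baxter operator at the unit of $H$. The key observation is that \eqref{eh-rRBO-2} is universal in both the arguments $h$ and $a$, so an appropriate choice of $h$ should eliminate the dependence on $h$ and collapse \eqref{eh-rRBO-2} to the asserted identity.

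First I would record that $\huaB$ sends the unit to the unit. Indeed, from \eqref{eh-rRBO-1} the map $\huaB:(H,\cdot_H)\to (G,\cdot_G)$ is a group homomorphism, so setting $h=k=e_H$ in \eqref{eh-rRBO-1} gives $\huaB(e_H)\cdot_G\huaB(e_H)=\huaB(e_H)$, forcing $\huaB(e_H)=e_G$. Here I use that in a (skew) brace the units of the two underlying group structures coincide, so $e_H$ denotes the common unit of $(H,\cdot_H)$ and $(H,\circ_H)$ (and similarly $e_G$ for $G$).

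Next I would substitute $h=e_H$ into \eqref{eh-rRBO-2}. The left-hand side becomes
\begin{equation*}
(\huaB(e_H)\cdot_G a)\circ_G \huaB(k)=(e_G\cdot_G a)\circ_G\huaB(k)=a\circ_G\huaB(k),
\end{equation*}
while the right-hand side becomes
\begin{equation*}
\huaB\bigl(e_H\circ_H \Phi(\huaB(e_H)\cdot_G a)k\bigr)\cdot_G a=\huaB\bigl(e_H\circ_H\Phi(a)k\bigr)\cdot_G a=\huaB(\Phi(a)k)\cdot_G a,
\end{equation*}
where in the last step I use that $e_H$ is the unit of $(H,\circ_H)$. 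Equating the two sides yields $a\circ_G\huaB(k)=\huaB(\Phi(a)k)\cdot_G a$, which is precisely \eqref{enhance-property}.

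There is no real obstacle here: the argument is a one-line specialization of the enhanced axiom, relying only on the two elementary facts that $\huaB(e_H)=e_G$ and that the units of $\cdot_H$ and $\circ_H$ agree in a brace. The only thing worth flagging is the (standard) identification of units, which justifies the cancellation $e_H\circ_H x=x$ used in the computation.
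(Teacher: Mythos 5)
Your proof is correct and follows the paper's own argument exactly: both derive $\huaB(e_H)=e_G$ from \eqref{eh-rRBO-1} and then set $h=e_H$ in \eqref{eh-rRBO-2}. The extra remarks about the units of $\cdot_H$ and $\circ_H$ coinciding are a fine (standard) clarification but do not change the route.
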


\begin{proof}
By \eqref{eh-rRBO-1}, we have $\huaB(e_H)=e_G$. Setting $h$ equal to $e_H$ in \eqref{eh-rRBO-2}, we have $a \circ_G \huaB(k)=\huaB(\Phi(a)k)\cdot_G a$.
\end{proof}

\begin{defi}\label{fac-skew-brace}{\rm(\cite{JKVV})}
A {\bf left ideal} of a skew brace $(G,\cdot,\circ)$ is a subgroup I of $(G,\cdot)$ such that $a \rhd I\subseteq I$ for all $a\in G$, where $\rhd$ is given by $a \rhd b= a^{-1}\cdot(a \circ b)$.
In particular, if $(I,\cdot)$ is normal in $(G,\cdot)$ and $(I,\circ)$ is normal in $(G,\circ)$, then I is called an {\bf ideal} of a skew brace $(G,\cdot,\circ)$.

Moreover, we say that {\bf $G$ admits a factorization through} $H$ and $K$, if $H,K$ are two left ideals of a skew brace $G$ and $G=H\cdot K$ as the direct product.
\end{defi}

\begin{rmk}
If I is an ideal of a skew brace $(G,\cdot,\circ)$, then we have $a\circ b=a\cdot (a\rhd b)\in I$ for all $a,b\in I$, which implies that $(I,\circ)$ is a subgroup of $(G,\circ)$ . Consequently, $(I,\cdot,\circ)$ is a sub-brace of $(G,\cdot,\circ)$.
\end{rmk}

Since $a\circ b=a \cdot (a \rhd b)$, if $G$ admits a factorization through $H$ and $K$, we have
$$ G=H\cdot K=H\circ K.$$

Now we show that enhanced relative Rota-Baxter operators on braces give rise to   factorizations of braces.
Let $(G,\cdot_G,\circ_G)$ and $(H,\cdot_H,\circ_H)$ be braces and $\huaB:H \to G$ be a map.
For all $h\in H,a\in G$, the map
\begin{eqnarray}
\xi_\huaB:H\times G\to H\rtimes_{\Phi}G, \quad  \xi_\huaB(h,a)=(h,\huaB(h)\cdot_G a),
\end{eqnarray}
is invertible. In fact, the inverse map $\xi^{-1}:H\rtimes_{\Phi}G\to H\times G$  is given by
\begin{eqnarray}
\xi_\huaB^{-1}(h,a)=(h,\huaB(h)^{-1} \cdot_G a), \quad \forall h\in H,~a\in G.
\end{eqnarray}
Transporting the brace structure on $(H\rtimes_{\Phi}G,\cdot_{\rtimes},\circ_{\rtimes})$ to $H\times G$, we obtain a brace $(H\times G,\bullet,\star)$, where the multiplications $\bullet$ and $\star$ are given by
\begin{eqnarray}
\label{bullet}(h,a)\bullet(k,b)
&=&\xi_\huaB^{-1}\Big(\xi_\huaB(h,a)\cdot_{\rtimes} \xi_\huaB(k,b) \Big)\\
\nonumber &=&\xi_\huaB^{-1}\Big((h,\huaB(h)\cdot_G a)\cdot_{\rtimes} (k,\huaB(k)\cdot_G b)\Big)\\
\nonumber &=&\xi_\huaB^{-1}\Big(h \cdot_H k,(\huaB(h)\cdot_G a)\cdot_G (\huaB(k)\cdot_G b) \Big)\\
\nonumber &=&\Big ( h \cdot_H k,\huaB(h \cdot_H k)^{-1} \cdot_G  \huaB(h)\cdot_G a \cdot_G \huaB(k)\cdot_G b  \Big),
\end{eqnarray}
and
\begin{eqnarray}
\label{star}&&(h,a)\star(k,b)\\
\nonumber &=&\xi_\huaB^{-1}\Big(\xi_\huaB(h,a)\circ_{\rtimes} \xi_\huaB(k,b) \Big)\\
\nonumber &=&\xi_\huaB^{-1}\Big((h,\huaB(h)\cdot_G a)\circ_{\rtimes} (k,\huaB(k)\cdot_G b)\Big)\\
\nonumber &=&\xi_\huaB^{-1}\Big(h\circ_H \Phi(\huaB(h)\cdot_G a)(k),(\huaB(h)\cdot_G a)\circ_G(\huaB(k)\cdot_G b) \Big)\\
\nonumber &=&\Big(h\circ_H \Phi(\huaB(h)\cdot_G a)(k), \huaB(h\circ_H \Phi(\huaB(h)\cdot_G a)k)^{-1} \cdot_G ((\huaB(h)\cdot_G a)\circ_G (\huaB(k)\cdot_G b)) \Big).
\end{eqnarray}
Moreover, the unit of the brace $(H\times G,\bullet,\star)$ is $(e_H,\huaB(e_H)^{-1})$ and the inverse elements of $(h,a)$ in the groups $(H\times G,\bullet)$ and $(H\times G,\star)$ are given by
\begin{eqnarray}
\label{inverse-big1} (h,a)^{-1} &=& \Big(h^{-1},\huaB(h^{-1})^{-1} \cdot_G a^{-1} \cdot_G  \huaB(h)^{-1} \cdot_G \huaB(e_H)  \Big);\\
\label{inverse-big2}(h,a)^{\dagger}&=&\Big(\Phi(\huaB(h)\cdot_G a)^{-1}(\bar{h}),
\huaB(\Phi(\huaB(h)\cdot_G a)^{-1}(\bar{h}))^{-1} \cdot_G \overline{(\huaB(h) \cdot_G a)} \Big).
\end{eqnarray}

\begin{thm}\label{pro:factor}
With the  above notations, $(H\times G,\bullet,\star)$  is a brace factorization into ideals $H\times\{e_G\}$ and $\{e_H\}\times G$  if and only if $\huaB:H\to G$ is an enhanced relative Rota-Baxter operator on the brace $(G,\cdot_G,\circ_G)$ with respect to the semi-trivial action $\Phi:G \to \Aut(H)$.
\end{thm}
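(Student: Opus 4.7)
The plan is to reduce every claim to explicit coordinate computations via the bijection $\xi_{\huaB}$, which by construction makes $\xi_{\huaB}\colon(H\times G,\bullet,\star)\to(H\rtimes_{\Phi}G,\cdot_{\rtimes},\circ_{\rtimes})$ an isomorphism of braces. A warm-up observation: specializing \eqref{eh-rRBO-1} at $h=k=e_H$ gives $\huaB(e_H)=e_G$, so by \eqref{inverse-big1} the unit of $(H\times G,\bullet,\star)$ equals $(e_H,e_G)$, and \eqref{bullet} then reads $(h,e_G)\bullet(e_H,a)=(h,a)$; hence $(H\times\{e_G\})\cap(\{e_H\}\times G)=\{(e_H,e_G)\}$ and their $\bullet$-product exhausts $H\times G$.

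For the direction $(\Leftarrow)$, assume $\huaB$ is an enhanced relative Rota-Baxter operator. The preceding observation already supplies the direct-product decomposition. Closure of each distinguished subset under $\bullet$ and $\star$ is then a direct check using the formulas \eqref{bullet}--\eqref{star}: for $H\times\{e_G\}$ it invokes \eqref{eh-rRBO-1} and the special case of \eqref{eh-rRBO-2} at $a=e_G$, which recovers \eqref{rRBO-2}; for $\{e_H\}\times G$ it is immediate. It remains to verify the left-ideal property $(x,u)\rhd I\subseteq I$ in both cases. For $I=\{e_H\}\times G$, the first coordinate of $(x,u)\rhd(e_H,v)$ collapses to $x^{-1}\cdot_H x=e_H$, so the output automatically belongs to $\{e_H\}\times G$. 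The case $I=H\times\{e_G\}$ is the core computation, where the full strength of \eqref{eh-rRBO-2} is required.

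For the converse direction $(\Rightarrow)$, assume the factorization. Since each distinguished subset is a $\bullet$-subgroup containing the unit $(e_H,\huaB(e_H)^{-1})$, one deduces $\huaB(e_H)=e_G$. The $\bullet$-closure of $H\times\{e_G\}$ applied to $(h,e_G)\bullet(k,e_G)$, whose second coordinate by \eqref{bullet} equals $\huaB(h\cdot_H k)^{-1}\cdot_G\huaB(h)\cdot_G\huaB(k)$, forces this quantity to vanish, which is exactly \eqref{eh-rRBO-1}. The left-ideal property of $H\times\{e_G\}$ then produces \eqref{eh-rRBO-2} via the same core computation read in reverse.

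The heart of the argument, common to both directions, is the unfolding of $(x,u)\rhd(y,e_G)=(x,u)^{-1}\bullet\bigl((x,u)\star(y,e_G)\bigr)$, where the inverse is taken in the abelian group $(H\times G,\bullet)$. Setting $m:=x\circ_H\Phi(\huaB(x)\cdot_G u)(y)$, formula \eqref{star} gives $(x,u)\star(y,e_G)=\bigl(m,\huaB(m)^{-1}\cdot_G((\huaB(x)\cdot_G u)\circ_G\huaB(y))\bigr)$; since $(x,u)^{-1}=(x^{-1},u^{-1})$ by the abelianness of the $\bullet$-group, invoking \eqref{eh-rRBO-1} to rewrite $\huaB(x^{-1}\cdot_H m)=\huaB(x)^{-1}\cdot_G\huaB(m)$ and using the abelianness of $(G,\cdot_G)$, the second coordinate of the resulting $\bullet$-product collapses to $u^{-1}\cdot_G\huaB(m)^{-1}\cdot_G\bigl((\huaB(x)\cdot_G u)\circ_G\huaB(y)\bigr)$. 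This vanishes precisely when $(\huaB(x)\cdot_G u)\circ_G\huaB(y)=\huaB(m)\cdot_G u$, which is exactly \eqref{eh-rRBO-2}. The main obstacle is not conceptual but purely the bookkeeping in this one display; once it is carried out carefully, both implications of the equivalence fall out.
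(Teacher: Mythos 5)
Your proposal is correct and follows essentially the same route as the paper: transport the structure through $\xi_{\huaB}$, identify the sub-brace condition on $H\times\{e_G\}$ with \eqref{eh-rRBO-1}--\eqref{rRBO-2}, and recover the full enhanced identity \eqref{eh-rRBO-2} from the vanishing of the second coordinate of $(h,a)^{-1}\bullet\bigl((h,a)\star(k,e_G)\bigr)$, which is exactly the paper's core computation. The only cosmetic difference is that you spell out the left-ideal check for $\{e_H\}\times G$ and the derivation of $\huaB(e_H)=e_G$ from the unit, which the paper leaves as ``obvious.''
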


\begin{proof}
By \eqref{bullet} and \eqref{star}, it is obvious that $\{e_H\}\times G$ is a left ideal of the brace $(H\times G,\bullet,\star)$ by a direct calculation. For all $h,k\in H$, we have
\begin{eqnarray}
\label{descendent-group2} (h,e_G)\bullet(k,e_G)&\stackrel{\eqref{bullet}}{=}& \Big ( h \cdot_H k,\huaB(h \cdot_H k)^{-1} \cdot_G  \huaB(h) \cdot_G \huaB(k) \Big);\\
\label{descendent-group}~~(h,e_G)\star(k,e_G)&\stackrel{\eqref{star}}{=}&\Big(h\circ_H\Phi(\huaB(h))
(k),\huaB(h\circ_H \Phi(\huaB(h))k)^{-1} \cdot_G (\huaB(h)\circ_G \huaB(k)) \Big).
\end{eqnarray}
Thus, $H\times\{e_G\}$ is a sub-brace of $(H\times G,\bullet,\star)$ if and only if
\begin{eqnarray*}
\huaB(h \cdot_H k)^{-1} \cdot_G  \huaB(h) \cdot_G \huaB(k) &=& e_G;\\
\huaB(h\circ_H \Phi(\huaB(h))k)^{-1} \cdot_G (\huaB(h)\circ_G \huaB(k)) &=& e_G,
\end{eqnarray*}
which implies  that $\huaB:H\to G$ is a relative Rota-Baxter operator on the brace $(G,\cdot_G,\circ_G)$. Then we have $\huaB(e_H)=e_G$ and \eqref{inverse-big1} can be simplified to the following form:
$$ (h,a)^{-1}=(h^{-1},a^{-1}), \quad \forall h\in H,~a\in G. $$
Moreover, for all $a\in G,h,k\in H$, we have
\begin{eqnarray*}
&&(h,a)^{-1} \bullet ((h,a) \star (k,e_G))\\
&\stackrel{\eqref{star}}{=}&(h^{-1},a^{-1}) \bullet \Big(h\circ_H \Phi(\huaB(h)\cdot_G a)(k), \huaB(h\circ_H \Phi(\huaB(h)\cdot_G a)k)^{-1} \cdot_G ((\huaB(h)\cdot_G a)\circ_G \huaB(k))
\Big)\\
&\stackrel{\eqref{bullet}}{=}& \Big( h^{-1} \cdot_H  (h\circ_H \Phi(\huaB(h)\cdot_G a)(k)),    a^{-1} \cdot_G \huaB(h\circ_H \Phi(\huaB(h)\cdot_G a)k)^{-1} \cdot_G ((\huaB(h)\cdot_G a)\circ_G \huaB(k)) \Big).
\end{eqnarray*}
Thus, $H\times\{e_G\}$ is a left ideal of the brace $(H\times G,\bullet,\star)$ if and only if
$$ a^{-1} \cdot_G \huaB(h\circ_H \Phi(\huaB(h)\cdot_G a)k)^{-1} \cdot_G ((\huaB(h)\cdot_G a)\circ_G \huaB(k))=e_G, $$
which implies that $\huaB:H\to G$ is an enhanced relative Rota-Baxter operator on the brace $(G,\cdot_G,\circ_G)$ with respect to the semi-trivial action $\Phi$.
\end{proof}

Next we recall the notion of a matched pair of groups. Then we introduce the notion of a matched pair of braces using matched pairs of groups.

\begin{defi}{\rm(\cite{Ta})}
A {\bf matched pair of groups} is a triple $(G,H,\sigma)$, where $(G,\cdot_G)$ and $(H,\cdot_H)$ are groups and
$$\sigma:G\times H\lon H\times G,\quad (a,h)\mapsto(a\rightharpoonup h,a\leftharpoonup h),$$
is a map satisfying the following conditions:
\begin{eqnarray}
\label{MG-1}e_G\rightharpoonup h&=&h,\\
\label{MG-2}a\rightharpoonup(b\rightharpoonup h)&=&(a \cdot_G b)\rightharpoonup h,\\
\label{MG-3}(a\cdot_G b)\leftharpoonup h&=&\big(a\leftharpoonup(b\rightharpoonup h)\big)\cdot_G (b\leftharpoonup h),\\
\label{MG-4}a\leftharpoonup e_H&=& a,\\
\label{MG-5}(a\leftharpoonup h)\leftharpoonup k&=&a\leftharpoonup(h\cdot_H k),\\
\label{MG-6}a\rightharpoonup(h\cdot_H k)&=&(a\rightharpoonup h)\cdot_H \big((a\leftharpoonup h)\rightharpoonup k\big),
\end{eqnarray}
for all   $a,b\in G$,~$h,k\in H.$
\end{defi}

\begin{pro}\label{double-group}{\rm(\cite{Ta})}
Let $(G,\cdot_G)$ and $(H,\cdot_H)$ be groups. Then $(G,H,\sigma)$ is a matched pair of groups if and only if $(H \times G,\bowtie)$ is a group with the unit $(e_H,e_G)$, where the multiplication $\bowtie$ is given by
$$ (h,a)\bowtie (k,b)=(h \cdot_H (a\rightharpoonup k),(a\leftharpoonup k)\cdot_G b), \quad \forall a,b \in G,h,k \in H. $$
$(H \times G,\bowtie)$ is called the {\bf double group} of the matched pair of groups $(G,H,\sigma)$.
\end{pro}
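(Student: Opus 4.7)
The plan is to prove the equivalence by direct verification in both directions, exploiting the structural fact that under either hypothesis the subsets $H\times\{e_G\}$ and $\{e_H\}\times G$ play the role of copies of $(H,\cdot_H)$ and $(G,\cdot_G)$ in $(H\times G,\bowtie)$, while the ``cross'' product $(e_H,a)\bowtie (h,e_G)=(a\rightharpoonup h,a\leftharpoonup h)$ encodes all of the matched-pair data.

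For the ``only if'' direction I assume $(G,H,\sigma)$ is a matched pair. The unit law for $(e_H,e_G)$ follows from \eqref{MG-1} and \eqref{MG-4}, together with the derived identities $e_G\leftharpoonup h=e_G$ and $a\rightharpoonup e_H=e_H$; these are obtained by setting $a=b=e_G$ in \eqref{MG-3} and $h=k=e_H$ in \eqref{MG-6} and cancelling in the appropriate group. For associativity, I expand both $((h,a)\bowtie(k,b))\bowtie(l,c)$ and $(h,a)\bowtie((k,b)\bowtie(l,c))$: the first coordinates agree by \eqref{MG-6} applied to $a\rightharpoonup(k\cdot_H(b\rightharpoonup l))$ followed by \eqref{MG-2}, while the second coordinates agree by \eqref{MG-3} applied to $((a\leftharpoonup k)\cdot_G b)\leftharpoonup l$ followed by \eqref{MG-5}. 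Finally, I exhibit the inverse of $(h,a)$ as $(a^{-1}\rightharpoonup h^{-1},\,a^{-1}\leftharpoonup h^{-1})$, checking the left-inverse equation by one application each of \eqref{MG-6} (to get $e_H$ in the first coordinate, via $a^{-1}\rightharpoonup(h^{-1}\cdot_H h)=e_H$) and \eqref{MG-5} (to get $a^{-1}$, hence $e_G$, in the second coordinate); the right-inverse equation is symmetric.

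For the ``if'' direction I assume $(H\times G,\bowtie)$ is a group with unit $(e_H,e_G)$. Testing $(e_H,e_G)\bowtie(h,e_G)=(h,e_G)$ and $(h,a)\bowtie(e_H,e_G)=(h,a)$ immediately yields \eqref{MG-1}, \eqref{MG-4} and the auxiliary relations $a\rightharpoonup e_H=e_H$ and $e_G\leftharpoonup h=e_G$. The bowtie formula then specializes to $(e_H,a)\bowtie(e_H,b)=(e_H,a\cdot_G b)$ and $(h,e_G)\bowtie(k,e_G)=(h\cdot_H k,e_G)$, together with the swap $(e_H,a)\bowtie(h,e_G)=(a\rightharpoonup h,a\leftharpoonup h)$. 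Applying associativity to the triple $\bigl((e_H,a),(e_H,b),(h,e_G)\bigr)$ and comparing coordinates extracts \eqref{MG-2} and \eqref{MG-3}; applying it to $\bigl((e_H,a),(h,e_G),(k,e_G)\bigr)$ extracts \eqref{MG-6} and \eqref{MG-5}. All six matched-pair axioms are therefore recovered.

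The main obstacle is purely organizational: in the forward direction the associativity check involves nested occurrences of $\rightharpoonup$ and $\leftharpoonup$, and one must carefully route each coordinate through the correct axiom. The matched-pair axioms are tailored so that each of the two coordinates of the associativity identity consumes exactly two axioms, and no axiom is left over; this gives a pleasing correspondence that also makes the reverse direction transparent once the right triples are plugged into associativity.
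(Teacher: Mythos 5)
Your proof is correct. Note that the paper itself gives no argument for this proposition --- it is quoted from Takeuchi's survey \cite{Ta} --- so there is nothing to compare against; your direct verification (unit law from \eqref{MG-1}, \eqref{MG-4} plus the derived identities $e_G\leftharpoonup h=e_G$ and $a\rightharpoonup e_H=e_H$; associativity split coordinatewise into \eqref{MG-6}+\eqref{MG-2} and \eqref{MG-3}+\eqref{MG-5}; explicit inverse $(a^{-1}\rightharpoonup h^{-1},a^{-1}\leftharpoonup h^{-1})$; and the converse by testing associativity on the triples $\bigl((e_H,a),(e_H,b),(h,e_G)\bigr)$ and $\bigl((e_H,a),(h,e_G),(k,e_G)\bigr)$) is exactly the standard argument and checks out. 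The only imprecision is the word ``symmetric'' for the right-inverse check: it is not literally the mirror of the left-inverse computation but uses the complementary axioms \eqref{MG-1}, \eqref{MG-2}, \eqref{MG-3}; alternatively one can skip it entirely, since a monoid in which every element has a left inverse is already a group.
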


\begin{rmk}
Let $(G,H,\sigma)$ be a matched pair of groups. Then $\sigma$ is bijective. Moreover, the triple $(H,G,\sigma^{-1})$ is also a  matched pair of groups.
\end{rmk}

\begin{defi}
A {\bf matched pair of braces} is a quadruple $(G,H,\sigma,\theta)$, where $G=(G,\cdot_G,\circ_G)$ and $H=(H,\cdot_H,\circ_H)$ are braces and
\begin{eqnarray*}
&&\sigma:G\times H\lon H\times G,\quad (a,h)\mapsto(a\rightharpoonup h,a\leftharpoonup h);\\
&&\theta:G\times H\lon H\times G,\quad (a,h)\mapsto(a\rightharpoondown h,a\leftharpoondown h),
\end{eqnarray*}
are maps such that
\begin{itemize}
\item[{\rm(i)}] $(G,H,\sigma)$ is a matched pair of groups $(G,\cdot_G)$ and $(H,\cdot_H)$;
\item[{\rm(ii)}] $(G,H,\theta)$ is a matched pair of groups $(G,\circ_G)$ and $(H,\circ_H)$;
\item[{\rm(iii)}] $\rightharpoonup,\leftharpoonup$ and $\rightharpoondown,\leftharpoondown$ satisfy the following compatibility conditions:
\begin{eqnarray}
\label{compatible-1}
&&(a\leftharpoondown(k\cdot_{H}(b\rightharpoonup t)))\circ_{G}((b\leftharpoonup t)\cdot_G c)\\
\nonumber &=& \Big(((((a \leftharpoondown k)\circ_{G} b) \leftharpoonup a^{-1})\cdot_{G} a^{-1}) \leftharpoonup
(a\rightharpoondown t) \Big) \cdot_{G} ((a \leftharpoondown t)\circ_{G} c);\\
\label{compatible-2} &&h \circ_{H} (a \rightharpoondown (k \cdot_H (b\rightharpoonup t)) )\\
\nonumber &=& (h \circ_H (a \rightharpoondown k))\cdot_{H} \Big(((a \leftharpoondown k)\circ_{G} b)\rightharpoonup  (a^{-1} \rightharpoonup h^{-1}) \Big) \cdot_H \\
\nonumber &&  \Big( (((a \leftharpoondown k)\circ_{G} b)\leftharpoonup (a^{-1} \rightharpoonup h^{-1}))\cdot_G (a \leftharpoonup (a^{-1}\rightharpoonup h^{-1}))^{-1} \Big) \rightharpoonup
 (h \circ_{H}  (a \rightharpoondown t)),
\end{eqnarray}
for all $a,b,c\in G, h,k,t \in H$.
\end{itemize}
\end{defi}

\begin{rmk}
Let $(G,H,\sigma,\theta)$ be a matched pair of braces. Then $\sigma,\theta$ are bijective. Moreover, the quadruple $(H,G,\sigma^{-1},\theta^{-1})$ is also a  matched pair of braces.
\end{rmk}

\begin{pro}\label{double-brace}
Let $(G,\cdot_G,\circ_G)$ and $(H,\cdot_H,\circ_H)$ be braces. Then $(G,H,\sigma,\theta)$ is a matched pair of braces if and only if $(H \times G,\cdot_{\bowtie},\circ_{\bowtie})$ is a brace with unit $(e_H,e_G)$ and multiplications $\cdot_{\bowtie},\circ_{\bowtie}$ given by
\begin{eqnarray*}
(h,a)\cdot_{\bowtie} (k,b)&=&(h \cdot_H (a\rightharpoonup k),(a\leftharpoonup k)\cdot_G b), \\
(h,a)\circ_{\bowtie} (k,b)&=&(h \circ_H (a\rightharpoondown k),(a\leftharpoondown k)\circ_G b),\quad \forall a,b \in G,h,k \in H.
\end{eqnarray*}
$(H \times G,\cdot_{\bowtie},\circ_{\bowtie})$ is called the {\bf double} brace of the matched pair of braces $(G,H,\sigma,\theta)$.
\end{pro}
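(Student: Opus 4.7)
The plan is to decompose the brace axioms on $(H \times G, \cdot_{\bowtie}, \circ_{\bowtie})$ into the three pieces of a brace structure and to match each with one of the three conditions defining a matched pair of braces. For the two group laws, Proposition~\ref{double-group} applied to $\sigma$ (resp.\ $\theta$) shows that $(H \times G, \cdot_{\bowtie})$ (resp.\ $(H \times G, \circ_{\bowtie})$) is a group with unit $(e_H, e_G)$ if and only if $(G, H, \sigma)$ (resp.\ $(G, H, \theta)$) is a matched pair of the groups $(G, \cdot_G), (H, \cdot_H)$ (resp.\ $(G, \circ_G), (H, \circ_H)$); abelianness of $\cdot_{\bowtie}$ is inherited from the abelianness of $(G, \cdot_G)$ and $(H, \cdot_H)$ together with the matched pair axioms \eqref{MG-3} and \eqref{MG-6}.

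It remains to verify that, given these group laws, the brace compatibility
$$(h,a) \circ_{\bowtie} \bigl((k,b) \cdot_{\bowtie} (t,c)\bigr) = \bigl((h,a) \circ_{\bowtie} (k,b)\bigr) \cdot_{\bowtie} (h,a)^{-1} \cdot_{\bowtie} \bigl((h,a) \circ_{\bowtie} (t,c)\bigr)$$
in $(H \times G, \cdot_{\bowtie}, \circ_{\bowtie})$ holds for all $a,b,c \in G, h,k,t \in H$ if and only if \eqref{compatible-1} and \eqref{compatible-2} hold. I would expand both sides using the explicit formulas for $\cdot_{\bowtie}$, $\circ_{\bowtie}$, and the inverse formula in the double group $(H \times G, \cdot_{\bowtie})$, then project onto the $H$- and $G$-components. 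Comparing $H$-components produces \eqref{compatible-2}, while comparing $G$-components produces \eqref{compatible-1}; the argument is reversible, giving both directions of the equivalence.

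The principal obstacle is the algebraic bookkeeping: both sides of the brace equation expand into long words in the four partial actions $\rightharpoonup, \leftharpoonup, \rightharpoondown, \leftharpoondown$ and the four operations $\cdot_G, \cdot_H, \circ_G, \circ_H$. Lining the two expansions up requires repeated use of the matched pair identities \eqref{MG-1}--\eqref{MG-6} for both $\sigma$ and $\theta$, together with the brace identity \eqref{eq-brace} in $G$ and $H$ to redistribute the $a^{-1}$ factor that arises from $(h,a)^{-1}$ in the double group. Once this alignment is performed, \eqref{compatible-1} and \eqref{compatible-2} emerge as precisely the conditions needed for the $G$- and $H$-components to agree on both sides, completing the equivalence.
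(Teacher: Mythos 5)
Your proposal follows the paper's proof exactly: apply Proposition~\ref{double-group} once to $\sigma$ and once to $\theta$ to get the two group structures on $H\times G$, then expand the brace compatibility identity componentwise and observe that the $H$- and $G$-components match precisely when \eqref{compatible-2} and \eqref{compatible-1} hold. The one assertion you should not make is that commutativity of $\cdot_{\bowtie}$ is ``inherited'' from the abelianness of $(G,\cdot_G)$ and $(H,\cdot_H)$ via \eqref{MG-3} and \eqref{MG-6}: the double group of a matched pair of abelian groups need not be abelian (e.g.\ $\mathbb{Z}/2$ acting on $\mathbb{Z}/3$ by inversion gives $S_3$), and in fact $\cdot_{\bowtie}$ is commutative only when the action $\rightharpoonup$ is trivial, as it is in the paper's application in Theorem~\ref{rRBO-mp}. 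The paper's own proof silently elides this point, so your argument is no weaker than the original on the substantive equivalence, but the specific justification you offer for abelianness is false as stated and should either be dropped or replaced by the observation that the hypothesis ``$(H\times G,\cdot_{\bowtie},\circ_{\bowtie})$ is a brace'' (rather than a skew brace) carries the commutativity requirement on its own.
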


\begin{proof}
By Proposition \ref{double-group}, we deduce that $(G,H,\sigma)$ is a matched pair of groups if and only if $(H \times G,\cdot_{\bowtie})$ is a group with unit $(e_H,e_G)$. Similarly, $(G,H,\theta)$ is a matched pair of groups if and only if $(H \times G,\circ_{\bowtie})$ is a group with unit $(e_H,e_G)$. Moreover, by a direct calculation, for all $a,b,c\in G,h,k,t\in H$,
$$ (h,a) \circ_{\bowtie} ((k,b) \cdot_{\bowtie} (t,c))=(h,a) \circ_{\bowtie} (k,b) \cdot_{\bowtie} (h,a)^{\cdot_{{\bowtie}} -1} \cdot_{\bowtie}  (h,a) \circ_{\bowtie} (t,c) $$
if and only if the equations \eqref{compatible-1} and \eqref{compatible-2} hold. Therefore,
$(G,H,\sigma,\theta)$ is a matched pair of braces if and only if $(H \times G,\cdot_{\bowtie},\circ_{\bowtie})$ is a brace.
\end{proof}

\begin{rmk}
The notion of a matched pair of braces here is very different from the one introduced in \cite{Bachiller2}. The latter was used to construct a non-trivial example of finite simple left braces and recover a finite left brace from its Sylow subgroups.
\end{rmk}

Enhanced relative Rota-Baxter operators on braces naturally give rise to matched pairs of braces.

\begin{thm}\label{rRBO-mp}
Let $\huaB:H \to G$ be an enhanced relative Rota-Baxter operator on a brace $(G,\cdot_G,\circ_G)$ with respect to a semi-trivial action $\Phi:G \to \Aut(H)$. Define $\rightharpoonup,\rightharpoondown:G \times H \to H$ and $\leftharpoonup,\leftharpoondown:G \times H \to G$ respectively by
\begin{eqnarray*}
a \rightharpoonup h&=&h, \quad\quad\quad  a \leftharpoonup h=a,\\
a \rightharpoondown h&=& \Phi(a)h, \quad a \leftharpoondown h=a,
\end{eqnarray*}
for all $a\in G,h\in H$. Then $(G,H,\sigma,\theta)$ is a matched pair of braces, where
$\sigma(a,h)=(a \rightharpoonup h,a\leftharpoonup h)$ and $\theta(a,h)=(a \rightharpoondown h,a\leftharpoondown h).$
\end{thm}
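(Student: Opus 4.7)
The strategy is to reduce the verification to Proposition \ref{double-brace}, which characterizes matched pairs of braces as exactly those $(\sigma, \theta)$ for which the candidate double brace $(H\times G, \cdot_{\bowtie}, \circ_{\bowtie})$ is actually a brace. The key observation is that, for the very simple $\sigma$ and $\theta$ defined in the statement, this double brace coincides on the nose with the semi-direct product $H\rtimes_{\Phi} G$ that has already been built earlier in the section from the semi-trivial action $\Phi$. Since $H\rtimes_{\Phi} G$ is already known to be a brace (by the proposition immediately preceding Definition \ref{rRBO}), the matched pair assertion is forced by Proposition \ref{double-brace}.

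In detail, I would substitute $a\rightharpoonup h=h$, $a\leftharpoonup h=a$, $a\rightharpoondown h=\Phi(a)h$, $a\leftharpoondown h=a$ into the formulas for $\cdot_{\bowtie}$ and $\circ_{\bowtie}$ given in Proposition \ref{double-brace}, obtaining
\begin{eqnarray*}
(h,a)\cdot_{\bowtie}(k,b) &=& (h\cdot_H k,\ a\cdot_G b),\\
(h,a)\circ_{\bowtie}(k,b) &=& (h\circ_H \Phi(a)k,\ a\circ_G b),
\end{eqnarray*}
which are literally the operations $\cdot_{\rtimes}$ and $\circ_{\rtimes}$ of \eqref{semi-direct-1}–\eqref{semi-direct-2}. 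Therefore $(H\times G, \cdot_{\bowtie}, \circ_{\bowtie})=H\rtimes_{\Phi} G$ is a brace, and the backward direction of Proposition \ref{double-brace} delivers that $(G,H,\sigma,\theta)$ is a matched pair of braces.

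If a direct verification were preferred, items (MG-1)–(MG-6) for $(G,H,\sigma)$ are automatic since both $\rightharpoonup$ and $\leftharpoonup$ are projections, while the same axioms for $(G,H,\theta)$ reduce to the semi-trivial action properties $\Phi(e_G)=\Id_H$, $\Phi(a\circ_G b)=\Phi(a)\Phi(b)$, and $\Phi(a)\in\Aut(H,\circ_H)$. The compatibility conditions \eqref{compatible-1} and \eqref{compatible-2} collapse, after substitution and an application of $\Phi(a)\in\Aut(H,\cdot_H)$, to the brace axiom \eqref{eq-brace} in $G$ and in $H$ respectively. The only real obstacle is bookkeeping in \eqref{compatible-2}, where one must carefully cancel the several factors involving $a^{-1}\rightharpoonup h^{-1}=h^{-1}$ and $(a\leftharpoondown k)\circ_G b=a\circ_G b$ before recognizing the resulting identity. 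It is worth noting that the enhanced Rota-Baxter relation \eqref{eh-rRBO-2} is not actually needed in either argument — only the semi-trivial action $\Phi$ enters the construction of $\sigma$ and $\theta$.
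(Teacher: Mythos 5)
Your proposal is correct, and your primary route is genuinely different from (and shorter than) the paper's. The paper first invokes Theorem \ref{pro:factor} to get the factorization of the transported brace $(H\times G,\bullet,\star)$ into the ideals $H\times\{e_G\}$ and $\{e_H\}\times G$, reads off the actions from the products $(e_H,a)\bullet(h,e_G)=(h,a)$ and $(e_H,a)\star(h,e_G)=(\Phi(a)h,a)$ --- this last identification is exactly where the enhanced condition enters, through the identity \eqref{enhance-property} $a\circ_G\huaB(k)=\huaB(\Phi(a)k)\cdot_G a$ --- and then finishes by checking the matched-pair-of-groups axioms and the compatibility conditions \eqref{compatible-1}--\eqref{compatible-2} directly, which is precisely the ``direct verification'' you sketch as your fallback (both conditions collapse to the brace identity \eqref{eq-brace} in $G$ and in $H$ together with $\Phi(a)\in\Aut(H)$). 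Your main argument instead observes that the candidate double brace of Proposition \ref{double-brace} is, on the nose, the semi-direct product $H\rtimes_\Phi G$ with unit $(e_H,e_G)$, already known to be a brace, so the backward direction of Proposition \ref{double-brace} gives the conclusion immediately; this buys brevity and makes transparent your (correct) closing observation that the literal statement depends only on the semi-trivial action $\Phi$ and not on $\huaB$ or the enhanced condition. What the paper's longer route buys in exchange is the link to Theorem \ref{pro:factor}: it exhibits the matched pair as the one arising from the factorization of $(H\times G,\bullet,\star)$ determined by $\huaB$, which is where the enhanced hypothesis genuinely matters.
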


\begin{proof}
By Theorem \ref{pro:factor}, if $\huaB:H \to G$ is an enhanced relative Rota-Baxter operator on a brace $(G,\cdot_G,\circ_G)$ with respect to $\Phi:G \to \Aut(H)$, $(H\times G,\bullet,\star)$  has a brace factorization into ideals $H\times\{e_G\}$ and $\{e_H\}\times G$. By \eqref{bullet} and \eqref{star}, for all $h\in H,a\in G$, we have
\begin{eqnarray*}
 (e_H,a) \bullet (h,e_G)&=&(h,a);\\
 (e_H,a) \star (h,e_G)&=& \big(\Phi(a)h,\huaB(\Phi(a)h)^{-1} \cdot_{G} (a \circ_G \huaB(h)) \big)\overset{\eqref{enhance-property}}{=}(\Phi(a)h,a),
\end{eqnarray*}
which implies that $a \rightharpoonup h=h,~a \leftharpoonup h=a$ and
$a \rightharpoondown h= \Phi(a)h,~a \leftharpoondown h=a$ by Proposition \ref{double-brace}.
It is obvious that $(G,H,\sigma)$ is a matched pair of groups $(G,\cdot_G)$ and $(H,\cdot_H)$ and $(G,H,\theta)$ is also a matched pair of groups $(G,\circ_G)$ and $(H,\circ_H)$. We only need to prove that $\rightharpoonup,\leftharpoonup,\rightharpoondown,\leftharpoondown$ satisfies the compatibility conditions \eqref{compatible-1} and \eqref{compatible-2}. For all $a\in G,h,k,t\in H$, we have
\begin{eqnarray*}
h \circ_H \Phi(a)(k \cdot_H t)= h \circ_H (\Phi(a)(k) \cdot_H  \Phi(a)(t))
=  (h \circ_H \Phi(a)(k)) \cdot_H h^{-1} \cdot_H (h \circ_H \Phi(a)(t)),
\end{eqnarray*}
which implies that Eq. \eqref{compatible-2} holds. Eq. \eqref{compatible-1} holds naturally by \eqref{eq-brace}. Thus, $(G,H,\sigma,\theta)$ is a matched pair of braces.
\end{proof}

\section{Enhanced Rota-Baxter operators on two-sided braces and factorization theorems}\label{sec:RBO-two-brace}

In this section, we introduce the notion of Rota-Baxter operators and enhanced Rota-Baxter operators on two-sided braces. An enhanced Rota-Baxter operator on a two-sided brace gives rise to a braided brace and possesses a factorization theorem, generalizing the factorization theorem of Lie groups given by \cite[Theorem 3.5]{GLS}.

\begin{pro}
Let $(G,\cdot,\circ)$ be a two-sided brace. Then the adjoint action $\Ad^{\circ}$ for the group
$(G,\circ)$ is a semi-trivial action of the two-sided brace $(G,\cdot,\circ)$ on itself.
\end{pro}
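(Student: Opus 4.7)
The plan is to verify the two conditions in Definition \ref{semi-trivial-act}: first, that $\Ad^\circ_a \in \Aut(G,\cdot,\circ)$ for each $a \in G$; second, that $\Ad^\circ_{a \circ b} = \Ad^\circ_a \, \Ad^\circ_b$. The second condition is immediate, since the adjoint action of any group on itself is a group homomorphism into its automorphism group of $(G,\circ)$. The nontrivial task is to show that $\Ad^\circ_a$ is a brace automorphism, i.e.\ that it also respects the $\cdot$-multiplication. (Bijectivity then comes for free from $\Ad^\circ_a \, \Ad^\circ_{\bar a} = \Id$.)

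Preservation of $\circ$ by $\Ad^\circ_a$ is the usual computation
\[
\Ad^\circ_a(b \circ c) = a \circ b \circ c \circ \bar a = (a \circ b \circ \bar a) \circ (a \circ c \circ \bar a) = \Ad^\circ_a(b) \circ \Ad^\circ_a(c),
\]
where we insert $\bar a \circ a = e$ in the middle. The substantive step is preservation of $\cdot$. First I would rewrite $\Ad^\circ_a(b \cdot c) = a \circ (b \cdot c) \circ \bar a$, apply the left distributivity \eqref{eq-brace} to $a \circ (b \cdot c)$, and then apply the two-sided (right) distributivity \eqref{eq-two-sided-brace} twice to peel off $\bar a$ through the resulting three-fold $\cdot$-product. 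This produces
\[
\Ad^\circ_a(b \cdot c) = \bigl((a \circ b) \circ \bar a\bigr) \cdot \bar a^{-1} \cdot (a^{-1} \circ \bar a) \cdot \bar a^{-1} \cdot \bigl((a \circ c) \circ \bar a\bigr),
\]
so the identification with $\Ad^\circ_a(b) \cdot \Ad^\circ_a(c)$ reduces to the identity $\bar a^{-1} \cdot (a^{-1} \circ \bar a) \cdot \bar a^{-1} = e$, where $e$ is the common unit of $(G,\cdot)$ and $(G,\circ)$.

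To settle this last identity I would invoke equation \eqref{equality-two-sided-brace} with $b = \bar a$, which gives $a^{-1} \circ \bar a = \bar a \cdot (a \circ \bar a)^{-1} \cdot \bar a = \bar a \cdot \bar a$ since $a \circ \bar a = e$. Substituting into the middle factor and using commutativity of $(G,\cdot)$ yields $\bar a^{-1} \cdot \bar a^{2} \cdot \bar a^{-1} = e$, finishing the verification.

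The only potential obstacle is bookkeeping: one must carefully apply the right distributivity to a three-term $\cdot$-product (an inductive use of \eqref{eq-two-sided-brace}) and track that the auxiliary $\bar a^{-1}$ factors introduced by each application cancel precisely against the factor $\bar a^{2}$ coming from $a^{-1} \circ \bar a$. Once this is done, $\Ad^\circ_a$ is seen to be a homomorphism of both group structures, hence a brace automorphism, completing the proof.
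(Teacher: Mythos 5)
Your proof is correct and follows essentially the same route as the paper: expand $a\circ(b\cdot c)\circ\bar a$ using \eqref{eq-brace}, peel off $\bar a$ through the three-fold $\cdot$-product via two applications of \eqref{eq-two-sided-brace}, and kill the middle factor using \eqref{equality-two-sided-brace} with $b=\bar a$ together with $a\circ\bar a=e$. No substantive difference from the paper's argument.
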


\begin{proof}
We only need to show that $\Ad^{\circ}_a$ is an automorphism for the group $(G,\cdot)$ for all $a\in G$. Actually, for all $a,b,c\in G$, we have
\begin{eqnarray*}
\Ad^{\circ}_{a}(b \cdot c)&=& a\circ (b \cdot c) \circ \bar{a}\\
&=& (a \circ b \cdot a^{-1} \cdot a \circ c)\circ \bar{a}\\
&\overset{\eqref{eq-two-sided-brace}}{=}& (a \circ b \circ \bar{a})\cdot \bar{a}^{-1} \cdot
(a^{-1} \circ \bar{a}^{-1})\cdot \bar{a}^{-1} \cdot (a \circ c \circ \bar{a})\\
&\overset{\eqref{equality-two-sided-brace}}{=}& (a \circ b \circ \bar{a})\cdot (a \circ \bar{a})\cdot (a \circ c \circ \bar{a})\\
&=& \Ad^{\circ}_{a}(b) \cdot \Ad^{\circ}_{a}(c),
\end{eqnarray*}
which implies that $\Ad^{\circ}_a \in \Aut(G,\cdot)$ for all $a\in G$. Thus, $\Ad^{\circ}$ is a semi-trivial action of the two-sided brace $(G,\cdot,\circ)$ on itself.
\end{proof}

Now we can define (enhanced) Rota-Baxter operators on two-sided braces.
\begin{defi}
Let $(G,\cdot,\circ)$ be a two-sided brace. Then a map $\huaB:G \to G$ is called a {\bf  Rota-Baxter operator} on the two-sided brace $(G,\cdot,\circ)$, if
\begin{eqnarray}
\label{RBO-1}\huaB(a) \cdot \huaB(b)&=&\huaB(a \cdot b);\\
\label{RBO-2} \huaB(a) \circ \huaB(b) &=& \huaB( a \circ (\Ad^{\circ}_{\huaB (a)}b)),
\end{eqnarray}
for all $a,b\in G$. If in addition, $\huaB:G \to G$ satisfies the equality
\begin{eqnarray}\label{enhanced-RBO}
\label{en-RBO} (\huaB(a) \cdot x)\circ \huaB(b) &=& \huaB( a \circ (\Ad^{\circ}_{\huaB (a) \cdot x}b)) \cdot x, \quad \forall a,b,x\in G,
\end{eqnarray}
then $\huaB:G \to G$ is called an {\bf enhanced Rota-Baxter operator} on the two-sided brace $(G,\cdot,\circ)$. An {\bf enhanced Rota-Baxter two-sided brace} is a two-sided brace $(G,\cdot,\circ)$ together with an enhanced Rota-Baxter operator $\huaB$.
\end{defi}

\begin{lem}
Let $\huaB:G \to G$ be an enhanced Rota-Baxter operator on the two-sided brace $(G,\cdot,\circ)$.
Then we have
\begin{eqnarray}\label{enhance-RBO}
b \circ \huaB(b)=\huaB(b) \cdot b, \quad \forall b\in G.
\end{eqnarray}
\end{lem}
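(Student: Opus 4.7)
The plan is to reduce the identity to the previously established formula \eqref{enhance-property}, specialized to the two-sided brace setting where the semi-trivial action is $\Phi = \Ad^\circ$ and the operator $\huaB$ goes from $G$ to itself.

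First I would observe that an enhanced Rota-Baxter operator on a two-sided brace $(G,\cdot,\circ)$ is precisely an enhanced relative Rota-Baxter operator on the brace $(G,\cdot,\circ)$ with respect to the semi-trivial action $\Ad^\circ : G \to \Aut(G)$. Therefore the lemma preceding \eqref{enhance-property} applies, giving
\[
a \circ \huaB(k) = \huaB(\Ad^\circ_a(k)) \cdot a, \qquad \forall a,k \in G.
\]

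Next I would specialize this identity by setting $a = k = b$. The key observation is that, in any group, the adjoint action of an element on itself is trivial: $\Ad^\circ_b(b) = b \circ b \circ \bar b = b$. Substituting this into the displayed formula yields $b \circ \huaB(b) = \huaB(b)\cdot b$, which is the desired equality.

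There is no serious obstacle here; the content of the lemma is essentially a one-line consequence of \eqref{enhance-property} together with the self-fixing property of the adjoint action. The only mild point worth flagging is that one must first reinterpret the enhanced Rota-Baxter condition \eqref{enhanced-RBO} as an instance of the enhanced relative Rota-Baxter condition \eqref{eh-rRBO-2} so as to legitimately invoke \eqref{enhance-property}.
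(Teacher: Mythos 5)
Your proof is correct and is essentially the paper's own argument: the paper sets $a=e_G$ and $x=b$ directly in \eqref{en-RBO} and uses $\Ad^\circ_b(b)=b$, which is exactly the composite of your two substitutions (first $h=e$ to get \eqref{enhance-property}, then $a=k=b$). Routing through \eqref{enhance-property} is a harmless repackaging, and your flag about reinterpreting \eqref{enhanced-RBO} as an instance of \eqref{eh-rRBO-2} with $\Phi=\Ad^\circ$ is justified by the paper's earlier proposition that $\Ad^\circ$ is a semi-trivial action.
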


\begin{proof}
By \eqref{RBO-1}, we have $\huaB(e_G)=e_G$. Setting $a$ equal to $e_G$ and $x$ equal to $b$ in \eqref{en-RBO}, then we have $b \circ \huaB(b)=\huaB(b) \cdot b$ for all $b\in G$.
\end{proof}

By Proposition \ref{descendent-brace}, we obtain that a Rota-Baxter operator on the two-sided brace induces a descendent brace. However, the descendent brace might no longer be two-sided.

\begin{cor}\label{two-descendent-brace}
If $\huaB:G \to G$ is a Rota-Baxter operator on the two-sided brace $(G,\cdot,\circ)$, then
$(G,\cdot,\ast)$ is a brace, called the {\bf descendent brace} and denoted by $G_{\huaB}$, where $\ast$ is defined by
\begin{equation}\label{two-sided-descendent}
a \ast b=  a \circ  \Ad^{\circ}_{\huaB (a)}(b), \quad \forall a,b\in G.
\end{equation}
Moreover,  $\huaB:(G,\cdot,\ast) \to (G,\cdot,\circ) $ is a homomorphism of braces.
\end{cor}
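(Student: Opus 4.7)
The plan is to recognize this as a direct instance of Proposition \ref{descendent-brace}, with $H=G$ (endowed with the same operations $\cdot$ and $\circ$) and the semi-trivial action taken to be $\Phi = \Ad^{\circ}$. The hypotheses are already in place: the immediately preceding proposition asserts that $\Ad^{\circ}$ is a semi-trivial action of the two-sided brace $(G,\cdot,\circ)$ on itself. Hence I only need to identify the structure.

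Comparing Definition \ref{rRBO} with \eqref{RBO-1} and \eqref{RBO-2}, I observe that a Rota-Baxter operator $\huaB$ on the two-sided brace $(G,\cdot,\circ)$ is literally the same data as a relative Rota-Baxter operator on $(G,\cdot,\circ)$ with respect to $\Ad^{\circ}$. Proposition \ref{descendent-brace} therefore immediately yields a descendent brace $(G,\cdot_{\huaB},\circ_{\huaB})$. Unpacking \eqref{descendent} with $\cdot_H=\cdot$ and $\Phi=\Ad^{\circ}$ gives
\[
a\cdot_{\huaB}b = a\cdot b, \qquad a\circ_{\huaB}b = a\circ \Ad^{\circ}_{\huaB(a)}(b) = a\ast b,
\]
so the descendent brace is exactly $(G,\cdot,\ast)$. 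The brace property (as opposed to merely a skew brace) is inherited because $(G,\cdot)$ remains abelian. The second assertion, that $\huaB\colon (G,\cdot,\ast)\to(G,\cdot,\circ)$ is a brace homomorphism, is the final clause of Proposition \ref{descendent-brace}.

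I do not anticipate any real obstacle; the work has been absorbed by Proposition \ref{descendent-brace} together with the identification of $\Ad^{\circ}$ as a semi-trivial action. The one subtlety worth flagging in the write-up is that $(G,\cdot,\ast)$ is generally \emph{not} two-sided even though the ambient $(G,\cdot,\circ)$ is — so the statement of the corollary deliberately asserts only that $(G,\cdot,\ast)$ is a brace, consistent with the remark preceding it.
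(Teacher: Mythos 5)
Your proposal is correct and follows exactly the paper's route: the corollary is stated as an immediate consequence of Proposition \ref{descendent-brace}, applied with $H=G$ and the semi-trivial action $\Phi=\Ad^{\circ}$ furnished by the preceding proposition, with no further argument given. Your remark that $(G,\cdot,\ast)$ need not remain two-sided matches the paper's own comment just before the corollary.
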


Next we prove a factorization theorem of enhanced Rota-Baxter two-sided braces, which generalizes the factorization theorem of Rota-Baxter groups given by \cite[Theorem 3.5]{GLS}.

Let $(G,\huaB)$ be an enhanced Rota-Baxter two-sided brace. Denote by $(G_{\huaB},\cdot,\ast)$ the descendent brace given in Corollary \ref{two-descendent-brace}.

\begin{pro}\label{huaB+}
Let $(G,\huaB)$ be an enhanced Rota-Baxter two-sided brace. Define
$$ \huaB_+:G \to G,\quad \huaB_+(a)=a \circ \huaB(a). $$
Then $\huaB_+$ is a homomorphism of braces from $G_{\huaB}$ to $G$.
\end{pro}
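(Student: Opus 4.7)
The plan is to verify the two brace-homomorphism identities
\[
\huaB_+(a\cdot b)=\huaB_+(a)\cdot\huaB_+(b) \quad\text{and}\quad \huaB_+(a\ast b)=\huaB_+(a)\circ\huaB_+(b)
\]
separately. The crucial preliminary observation is that the enhancement identity \eqref{enhance-RBO} allows us to rewrite $\huaB_+(a) = a\circ\huaB(a) = \huaB(a)\cdot a$, presenting $\huaB_+$ in a form simultaneously well-adapted to both group laws.

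For the $\cdot$-compatibility, I would expand
\[
\huaB_+(a)\cdot\huaB_+(b)=(\huaB(a)\cdot a)\cdot(\huaB(b)\cdot b)
\]
and use that $(G,\cdot)$ is abelian (this is where the brace hypothesis, rather than merely skew-brace, enters) to rearrange the right-hand side as $\huaB(a)\cdot\huaB(b)\cdot a\cdot b$. The Rota-Baxter condition \eqref{RBO-1} then collapses $\huaB(a)\cdot\huaB(b)$ to $\huaB(a\cdot b)$, and a final application of \eqref{enhance-RBO} to $a\cdot b$ identifies $\huaB(a\cdot b)\cdot(a\cdot b)$ with $\huaB_+(a\cdot b)$.

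For the $(\ast,\circ)$-compatibility, I would first apply \eqref{RBO-2} to the very definition $a\ast b = a\circ\Ad^\circ_{\huaB(a)}(b)$ to read off $\huaB(a\ast b) = \huaB(a)\circ\huaB(b)$. Then unfolding
\[
\huaB_+(a\ast b)=(a\ast b)\circ\huaB(a\ast b)=a\circ\huaB(a)\circ b\circ\overline{\huaB(a)}\circ\huaB(a)\circ\huaB(b),
\]
the telescoping cancellation $\overline{\huaB(a)}\circ\huaB(a)=e_G$ collapses the expression to $a\circ\huaB(a)\circ b\circ\huaB(b)=\huaB_+(a)\circ\huaB_+(b)$.

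Neither step is deep; the real point of the proof is bookkeeping. The enhancement identity \eqref{enhance-RBO}, rather than the bare Rota-Baxter axioms, is what unlocks the $\cdot$-compatibility via the rearrangement in the abelian group $(G,\cdot)$: without it, $\huaB_+(a)\cdot\huaB_+(b)$ cannot be simplified into a form containing a single application of $\huaB$. The only expected obstacle is making sure this enhancement is invoked exactly where needed; the $(\ast,\circ)$-compatibility, by contrast, is essentially immediate from \eqref{RBO-2} together with the definitions.
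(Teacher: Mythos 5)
Your proposal is correct and follows essentially the same route as the paper's proof: both parts hinge on the same two ingredients, namely the identity $a\circ\huaB(a)=\huaB(a)\cdot a$ from \eqref{enhance-RBO} combined with commutativity of $(G,\cdot)$ and \eqref{RBO-1} for the $\cdot$-compatibility, and the telescoping cancellation after expanding $\Ad^{\circ}_{\huaB(a)}$ and applying \eqref{RBO-2} for the $\circ$-compatibility. The only difference is that you run the first computation from $\huaB_+(a)\cdot\huaB_+(b)$ toward $\huaB_+(a\cdot b)$ rather than the reverse, which is immaterial.
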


\begin{proof}
For all $a,b\in G$, we have
\begin{eqnarray*}
\huaB_+(a \cdot b)
&=& (a \cdot b) \circ \huaB(a \cdot b)\\
&\overset{\eqref{enhance-RBO}}{=}& \huaB(a \cdot b) \cdot (a \cdot b)\\
&=& (\huaB(a) \cdot a) \cdot(\huaB(b) \cdot b)\\
&=& \huaB_+(a) \cdot \huaB_+(b),\\
\end{eqnarray*}
and
\begin{eqnarray*}
\huaB_+(a \ast b)
&=& a \circ (\Ad^{\circ}_{\huaB(a)} b)\circ \huaB(a \circ \Ad^{\circ}_{\huaB(a)} b)\\
&=& a \circ \huaB(a) \circ b \circ \overline{\huaB(a)} \circ \huaB(a)\circ \huaB(b)\\
&=& a \circ \huaB(a) \circ b \circ \huaB(b)\\
&=& \huaB_+(a) \circ \huaB_+(b).
\end{eqnarray*}
Thus, $\huaB_+$ is a homomorphism of braces from $G_{\huaB}$ to $G$.
\end{proof}

Let $\huaB$ be an enhanced Rota-Baxter operator on $G$. Define four subsets of $G$ as follows:
$$ G_+:=\Img \huaB_+,\quad G_-:=\Img \huaB,\quad K_+:=\Ker \huaB,\quad K_-:=\Ker \huaB_+.$$
Since both $\huaB$ and $\huaB_+$ are homomorphisms of braces, it follows that $G_+$ and $G_-$ are sub-braces of G,
$K_+$ and $K_-$ are ideals of $G_{\huaB}$ and $G_{\pm} \cong G_{\huaB}/ K_{\mp}$ as braces.
Moreover, we have the following relations.

\begin{lem}\label{lem-fac}
$K_+ \subset G_+$ and $K_- \subset G_-$ are ideals of braces.
\end{lem}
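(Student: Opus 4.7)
The plan is first to check the set-theoretic inclusions $K_+\subset G_+$ and $K_-\subset G_-$, and then to upgrade each to an ideal assertion by invoking the surjective brace homomorphisms $\huaB_+\colon G_{\huaB}\twoheadrightarrow G_+$ and $\huaB\colon G_{\huaB}\twoheadrightarrow G_-$ supplied by Proposition~\ref{huaB+} and Corollary~\ref{two-descendent-brace}, together with the standard fact that the image of an ideal under a surjective brace homomorphism is an ideal (normality in the $\cdot$-component is automatic since $(G,\cdot)$ is abelian, while normality in the $\circ$-component and the left-ideal condition $a\rhd I\subset I$ transfer by surjectivity and the homomorphism property).

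For the inclusions, if $k\in K_+$ then $\huaB(k)=e_G$, hence $\huaB_+(k)=k\circ\huaB(k)=k$, which places $k$ into $\Img\huaB_+=G_+$. If instead $k\in K_-$, then $k\circ\huaB(k)=e_G$ forces $\huaB(k)=\bar k$; since $\huaB(k)$ lies in the sub-brace $G_-$, and a sub-brace is stable under $\circ$-inversion, the element $k=\overline{\huaB(k)}$ also belongs to $G_-$.

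For the ideal claim, note that $K_+=\Ker\huaB$ and $K_-=\Ker\huaB_+$ are ideals of $G_{\huaB}$ as kernels of brace homomorphisms from $G_{\huaB}$. By the general principle above, it suffices to identify $\huaB_+(K_+)$ with $K_+$ and $\huaB(K_-)$ with $K_-$. The first identification is immediate from the calculation $\huaB_+|_{K_+}=\Id_{K_+}$ made in the previous paragraph.

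The only step requiring a small computation is $\huaB(K_-)=K_-$. Here I would solve $a\ast a^\dagger=e_G$ directly: from $a\ast b=a\circ\huaB(a)\circ b\circ\overline{\huaB(a)}$ one obtains $a^\dagger=\overline{\huaB(a)}\circ\bar a\circ\huaB(a)$, and for $a=k\in K_-$ this collapses to $k^\dagger=\bar k$ because $\huaB(k)=\bar k$. Since $K_-$ is a subgroup of $(G_{\huaB},\ast)$, it is closed under $\ast$-inversion, hence under $\circ$-inversion, and so $\huaB(K_-)=\{\bar k:k\in K_-\}=K_-$. The main (mild) obstacle is this computation of the $\ast$-inverse on $K_-$; the remainder is a direct application of the first-isomorphism-style transfer.
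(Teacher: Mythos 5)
Your proof is correct, and it reaches the conclusion by a more structural route than the paper's. The inclusions are handled identically: the paper also deduces $K_-\subset G_-$ from $a=\overline{\huaB(a)}=\huaB(a^{\dagger})$ for $a\in K_-$. For the ideal property, however, the paper works by hand: for $a\in K_-$ and $\huaB(b)\in G_-$ it computes $b\ast a\ast b^{\dagger}=\huaB(b)\circ a\circ\overline{\huaB(b)}$, applies $\huaB_+$ to this element and to $\huaB(b)\cdot a\cdot\huaB(b)^{-1}$ to check that both land in $K_-$, and then disposes of $K_+$ with a ``likewise''. You instead package the same mechanism as a general principle --- the image of an ideal under a surjective brace homomorphism is an ideal of the image --- and reduce everything to the two identities $\huaB_+|_{K_+}=\Id_{K_+}$ and $\huaB|_{K_-}=(\,\overline{\phantom{a}}\,)|_{K_-}$, the latter following from your computation $k^{\dagger}=\bar{k}$ on $K_-$, which agrees with the paper's formula $a^{\dagger}=\Ad^{\circ}_{\overline{\huaB(a)}}\bar{a}$. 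Your transfer principle is genuinely valid here (for $h=f(g)$ one has $h\rhd f(i)=f(g\rhd i)$, and both normality conditions push forward along a surjection), and it has the advantage of treating the left-ideal condition $a\rhd I\subseteq I$ and the two normality conditions uniformly, whereas the paper's explicit computation only exhibits the conjugation stabilities; the paper's version, on the other hand, stays entirely elementary and avoids appealing to any general fact about brace homomorphisms. Both arguments are sound.
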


\begin{proof}
Let $a\in K_-$, that is, $\huaB_+(a)=a \circ \huaB(a)=e$. Then we have $a=\overline{\huaB(a)}=\huaB(a^{\dagger})$, where $a^{\dagger}=\Ad^{\circ}_{\overline{\huaB (a)}} \bar{(a)} $ is the inverse of $a$ in the group $(G_{\huaB},\ast)$. Thus, $a\in G_-$ and hence $K_- \subset G_-$.

For any $a \in K_{-}$ and $\huaB(b) \in G_{-}$, let us check $\huaB(b) \circ a \circ \overline{\huaB(b)} \in K_{-}$. In fact, we have
\begin{eqnarray*}
b * a * b^{\dagger} &=& b\circ (\Ad^{\circ}_{\huaB(b)}a)\circ \Ad^{\circ}_{\huaB(b)\huaB(a)}\Ad^{\circ}_{\overline{\huaB(b)}}\bar{b} = \huaB(b)\circ a \circ\huaB(b)\\
&=& b \circ \huaB(b) \circ a \circ \overline{\huaB(b)} \circ \huaB(b) \circ \huaB(a) \circ
\overline{\huaB(b)} \circ \bar{b}  \circ \huaB(b)  \circ \overline{\huaB(a)} \circ \overline{\huaB(b)}\\
&=&  \huaB(b) \circ  a \circ  \overline{\huaB(b)},
\end{eqnarray*}
where in the last equation we used the fact that $a \circ \huaB(a) = e$. Thus,
\begin{eqnarray*}
\huaB (b) \circ a \circ \overline{\huaB(b)} \circ \huaB( \huaB (b) \circ a \circ \overline{\huaB(b)} )&=& \huaB (b) \circ a \circ \overline{\huaB(b)} \circ  \huaB(b \ast a \ast a^{\dagger})\\
&=& \huaB (b) \circ a \circ \overline{\huaB(b)} \circ  \huaB(b) \circ  \huaB(a) \circ \overline{\huaB(b)}= e,
\end{eqnarray*}
which implies that $\huaB(b) \circ a \circ \overline{\huaB(b)}$ is in $K_{-}$. Since $(G, \cdot)$ is an abelian group, we also have
\begin{eqnarray*}
(\huaB (b) \cdot a \cdot \huaB(b)^{-1}) \circ \huaB(\huaB (b) \cdot a \cdot \huaB(b)^{-1} )
= a \circ \huaB(a)=e,
\end{eqnarray*}
which implies that $\huaB (b) \cdot a \cdot \huaB(b)^{-1}$ is in $K_{-}$. Since $K_{-}$ is an ideal of $G_{\huaB}$,  $K_{-}$ is an ideal of $G_{-}$. Likewise we prove that $K_{+}$ is an ideal of $G_{+}$.
\end{proof}

Based on Lemma \ref{lem-fac}, we define a map
$$ \Theta : G_{-}/K_{-} \to G_{+}/K_{+}, \quad \Theta([\huaB(a)]) = [\huaB_{+}(a)], \quad \forall a \in G,$$
where $[\cdot]$ denotes the equivalence class in the two quotients. In order to show that $\Theta$ is well-defined, consider an arbitrary element $a \in K_{-}$, that is, $a = \overline{\huaB(a)} = \huaB(a^{\dagger})$. Note that $a^{\dagger} = \Ad^{\circ}_{\overline{\huaB(a)}}\bar{a} = \bar{a}$. Then we have
\begin{eqnarray*}
\Theta ([ \huaB(b) \circ a])&=& \Theta ([ \huaB(b \ast \bar{a})])\\
&=& [\huaB_+(b \ast \bar{a})]\\
&=& [(b \ast \bar{a}) \circ \huaB(b \ast \bar{a})]\\
&=& [ b \ast \huaB(b) \circ a \circ \overline{\huaB(b)} \circ \huaB(b)\circ\huaB(a^{\dagger}) ]\\
&=& \Theta ([ \huaB(b)]),
\end{eqnarray*}
and
\begin{eqnarray*}
\Theta ([ \huaB(b) \cdot a])&=& \Theta ([ \huaB(b \cdot a^{\dagger})])\\
&=& [\huaB_+(b \cdot a^{\dagger})]\\
&=& [(b \cdot a^{\dagger}) \circ \huaB(b \cdot a^{\dagger})]\\
&\overset{\eqref{enhance-RBO}}{=}& \huaB(b \cdot a^{\dagger}) \cdot (b \cdot a^{\dagger})\\
&=& \huaB(b) \cdot \huaB(a^{\dagger}) \cdot b \cdot a^{\dagger}\\
&=& \Theta ([ \huaB(b)]),
\end{eqnarray*}
which implies that $\Theta$ is well-defined.

\begin{pro}\label{Cayley-trans}
The map $\Theta : G_{-}/K_{-} \to G_{+}/K_{+}$ is an isomorphism of braces, called the {\bf Cayley transform} of the enhanced Rota-Baxter operator $\huaB$.
\end{pro}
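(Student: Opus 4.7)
Well-definedness of $\Theta$ has already been verified, so what remains are the homomorphism property and bijectivity. The overarching strategy is to leverage the fact that both $\huaB$ and $\huaB_+$ are homomorphisms of braces from the descendent brace $G_\huaB$ into $G$ (Corollary \ref{two-descendent-brace} and Proposition \ref{huaB+}); informally, $\Theta$ reads off the value of $\huaB_+$ given the value of $\huaB$, so the desired structural properties should follow by transporting information along these two homomorphisms.

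To see that $\Theta$ preserves the two brace operations, pick representatives $\huaB(a),\huaB(b)\in G_-$. Identity \eqref{RBO-1} gives $\huaB(a)\cdot\huaB(b)=\huaB(a\cdot b)$, while \eqref{RBO-2} combined with the definition \eqref{two-sided-descendent} of $\ast$ gives $\huaB(a)\circ\huaB(b)=\huaB(a\ast b)$. Applying $\Theta$ and expanding through Proposition \ref{huaB+} yields
$$\Theta([\huaB(a)]\cdot[\huaB(b)])=[\huaB_+(a\cdot b)]=[\huaB_+(a)]\cdot[\huaB_+(b)]=\Theta([\huaB(a)])\cdot\Theta([\huaB(b)]),$$
and the same chain with $\circ$ in place of $\cdot$ and $\ast$ in place of $\cdot$ on the middle factor. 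Surjectivity is immediate, since any $[\huaB_+(a)]\in G_+/K_+$ is $\Theta([\huaB(a)])$ by construction.

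The subtle step, which I expect to be the main obstacle, is injectivity. Assume $\huaB_+(a)\in K_+$, that is, $\huaB(\huaB_+(a))=e$; the goal is to deduce $\huaB_+(\huaB(a))=e$. The crucial observation is that $\huaB(a)$ commutes with itself in $(G,\circ)$, so $\Ad^{\circ}_{\huaB(a)}(\huaB(a))=\huaB(a)$, which collapses two a priori distinct products:
$$a\ast\huaB(a)=a\circ\Ad^{\circ}_{\huaB(a)}(\huaB(a))=a\circ\huaB(a)=\huaB_+(a).$$
Applying the brace homomorphism $\huaB$ to this equation and invoking \eqref{RBO-2} yields $\huaB(a)\circ\huaB(\huaB(a))=\huaB(\huaB_+(a))=e$, i.e., $\huaB_+(\huaB(a))=e$, so $\huaB(a)\in K_-$. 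The whole argument hinges on the coincidence $a\ast\huaB(a)=a\circ\huaB(a)$; once this is spotted, the rest is a routine transfer along quotient maps, completing the proof that $\Theta$ is an isomorphism of braces.
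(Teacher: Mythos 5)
Your proposal is correct and follows essentially the same route as the paper: the homomorphism property and surjectivity are handled by transporting along $\huaB$ and $\huaB_+$ exactly as in the text, and your injectivity argument — collapsing $a\ast\huaB(a)$ to $a\circ\huaB(a)=\huaB_+(a)$ via $\Ad^{\circ}_{\huaB(a)}(\huaB(a))=\huaB(a)$ and then applying $\huaB$ — is precisely the chain of equalities the paper uses to show $\huaB_+(\huaB(a))=e$.
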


\begin{proof}
It is obvious that $\Theta$ is surjective. To see that it is also injective, take $\huaB_{+}(a) = a \circ \huaB(a) \in K_{+}$, that is, $\huaB(a\circ \huaB(a)) = e$. Then we have
\begin{eqnarray*}
\huaB_{+}(\huaB(a)) &=& \huaB(a)\circ \huaB(\huaB(a)) = \huaB(a \ast \huaB(a))\\
  &=& \huaB(a \circ \Ad^{\circ}_{\huaB(a)}\huaB(a)) = \huaB(a \circ \huaB(a)) = e,
\end{eqnarray*}
which implies that $\huaB(a) \in K_{-}$. This proves that $\Theta$ is injective.

We next show that $\Theta$ is a homomorphism of braces, which follows from
\begin{eqnarray*}
\Theta([\huaB(a)]\cdot [\huaB(b)])&=& \Theta([\huaB(a\cdot b)])=[\huaB_+(a\cdot b)]\\
&=& [\huaB_+(a)\cdot \huaB_+(b)]= \Theta([\huaB(a)])\cdot \Theta([\huaB(b)]),
\end{eqnarray*}
and
\begin{eqnarray*}
\Theta([\huaB(a)]\circ [\huaB(b)])&=& \Theta([\huaB(a\ast b)])=[\huaB_+(a\ast b)]\\
&=& [\huaB_+(a)\circ \huaB_+(b)]= \Theta([\huaB(a)])\circ \Theta([\huaB(b)]),
\end{eqnarray*}
according to Proposition \ref{huaB+}.  Therefore, $\Theta$ is an isomorphism of braces.
\end{proof}

Now we consider the direct product brace $(G_{+} \times G_{-}, \cdot_{D},\circ_{D})$, where the multiplications $\cdot_{D}$ and $\circ_{D}$ are given by
\begin{eqnarray*}
(a_{+}, a_{-}) \cdot_{D} (b_{+}, b_{-}) &:=& (a_{+} \cdot b_{+}, a_{-} \cdot b_{-}),\\
(a_{+}, a_{-}) \circ_{D} (b_{+}, b_{-}) &:=& (a_{+} \circ b_{+}, a_{-} \circ b_{-}),
 \quad \forall a_{+}, b_{+} \in G_{+}, a_{-}, b_{-} \in G_{-}.
\end{eqnarray*}
Let $G_{\Theta} \subset G_{+} \times G_{-}$ denote the subset
\[
G_{\Theta} := \{(a_{+}, a_{-}) \in G_{+} \times G_{-} \mid \Theta([a_{-}]) = [a_{+}]\}.
\]
Define a map $\Phi : G \to G_{\Theta}$ by
\[
\Phi(a) := (\huaB_{+}(a), \huaB(a)).
\]

\begin{lem}
With the above notations, $G_{\Theta}$ is a sub-brace of $(G_{+} \times G_{-}, \cdot_{D},\circ_{D})$. Moreover, the map $\Phi$ is an isomorphism of braces from $(G_{\huaB},\cdot, \ast)$ to $(G_{\Theta},\cdot_{D},\circ_{D})$.
\end{lem}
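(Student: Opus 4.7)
The plan is to check three items in turn: (i) $G_\Theta$ is a sub-brace of $(G_+\times G_-,\cdot_D,\circ_D)$; (ii) $\Phi$ takes values in $G_\Theta$ and preserves both operations; (iii) $\Phi$ is bijective.

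For (i), closure of $G_\Theta$ under $\cdot_D$ and $\circ_D$ follows immediately from $\Theta$ being an isomorphism of braces, established in Proposition \ref{Cayley-trans}: given $(a_+,a_-),(b_+,b_-)\in G_\Theta$, applying $\Theta$ to $[a_-\cdot b_-]$ and $[a_-\circ b_-]$ and using that $\Theta$ is a homomorphism of both underlying groups of the quotient braces yields $\Theta([a_-\cdot b_-])=[a_+\cdot b_+]$ and $\Theta([a_-\circ b_-])=[a_+\circ b_+]$. The identity $(e,e)$ lies in $G_\Theta$, and the same reasoning shows that inverses stay in $G_\Theta$.

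For (ii), by the very definition of $\Theta$ we have $\Theta([\huaB(a)])=[\huaB_+(a)]$, so $\Phi(a)\in G_\Theta$ for every $a\in G$. The homomorphism property is componentwise: $\huaB:(G_\huaB,\cdot,\ast)\to(G,\cdot,\circ)$ is a homomorphism of braces by Corollary \ref{two-descendent-brace}, and so is $\huaB_+$ by Proposition \ref{huaB+}. Combining these gives $\Phi(a\cdot b)=\Phi(a)\cdot_D\Phi(b)$ and $\Phi(a\ast b)=\Phi(a)\circ_D\Phi(b)$.

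For (iii), injectivity is direct: if $\Phi(a)=(e,e)$ then $\huaB(a)=e$ and hence $\huaB_+(a)=a\circ\huaB(a)=a=e$. Surjectivity is the only step that needs real work. Given $(a_+,a_-)\in G_\Theta$, I first pick any preimage $b\in G$ with $\huaB(b)=a_-$, which exists because $a_-\in G_-=\Img\huaB$. The condition $\Theta([a_-])=[a_+]$ then forces $[\huaB_+(b)]=[a_+]$ in $G_+/K_+$, so $k:=a_+\cdot\huaB_+(b)^{-1}\in K_+$. The decisive observation is that every $k\in K_+$ satisfies $\huaB_+(k)=k\circ\huaB(k)=k\circ e=k$; therefore setting $c:=k\cdot b$ yields $\huaB(c)=\huaB(k)\cdot\huaB(b)=a_-$ and $\huaB_+(c)=\huaB_+(k)\cdot\huaB_+(b)=k\cdot\huaB_+(b)=a_+$, using abelianness of $(G,\cdot)$. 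Hence $\Phi(c)=(a_+,a_-)$, and this fiber-adjustment inside $K_+$, which is the one non-routine ingredient, concludes the argument.
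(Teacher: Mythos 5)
Your proof is correct, and parts (i), (ii) and the injectivity argument coincide with the paper's. The one place where you genuinely diverge is surjectivity: the paper also starts from a preimage $a$ of $a_-$ and writes $a_+=a\circ\huaB(a)\circ b$ with $b\in K_+$, but then corrects the preimage multiplicatively in the descendent brace, setting $a'=a\ast b$ and expanding $\huaB_+(a\ast b)=a\circ\Ad^{\circ}_{\huaB(a)}(b)\circ\huaB(a)\circ\huaB(b)=a\circ\huaB(a)\circ b$. You instead correct additively, setting $c=k\cdot b$ with $k=a_+\cdot\huaB_+(b)^{-1}\in K_+$, and the computation reduces to the neat observation that $\huaB_+$ restricts to the identity on $K_+=\Ker\huaB$ together with the $\cdot$-multiplicativity of $\huaB$ and $\huaB_+$. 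Both corrections are legitimate (the $\cdot$- and $\circ$-cosets of an ideal coincide, so $[a_+]=[\huaB_+(b)]$ does give $a_+\cdot\huaB_+(b)^{-1}\in K_+$); yours avoids the $\Ad^{\circ}$ manipulation and is slightly more elementary, while the paper's stays inside the $\ast$-structure that $\Phi$ is ultimately shown to preserve. You rightly flag this fiber adjustment as the only non-routine ingredient.
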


\begin{proof}
By Proposition \ref{Cayley-trans}, for any $(a_{+}, a_{-}), (b_{+}, b_{-}) \in G_{\Theta}$, we have
\begin{eqnarray*}
\Theta([a_{-} \cdot b_{-}]) &=& \Theta([a_{-}] \cdot [b_{-}]) = \Theta([a_{-}]) \cdot \Theta([b_{-}]) = [a_{+}] \cdot [b_{+}] = [a_{+}\cdot b_{+}];\\
\Theta([a_{-} \circ b_{-}]) &=& \Theta([a_{-}] \circ [b_{-}]) = \Theta([a_{-}]) \circ \Theta([b_{-}]) = [a_{+}] \circ [b_{+}] = [a_{+}\circ b_{+}],
\end{eqnarray*}
which implies that $(a_{+} \cdot b_{+}, a_{-}\cdot b_{-}) \in G_{\Theta}$ and $(a_{+} \circ b_{+}, a_{-}\circ b_{-}) \in G_{\Theta}$.  Then $G_{\Theta}$ is a sub-brace of $(G_{+} \times G_{-}, \cdot_{D},\circ_{D})$.

We next check that $\Phi$ is a bijection. Let $a \in G$ such that $\Phi(a) = (e, e)$. Then we have $\huaB(a) = a \circ \huaB(a) = e$. Thus $a = e$, which implies that $\Phi$ is injective. For any $(a_{+}, a_{-}) \in G_{\Theta}$, we have $\Theta([a_{-}]) = [a_{+}]$. Since $a_{-} \in G_{-}$, there exists $a \in G$ such that $\huaB(a) = a_{-}$. Hence we get
\begin{eqnarray*}
\Theta([a_{-}]) = \Theta([\huaB(a)]) = [a \circ \huaB(a)].
\end{eqnarray*}
Therefore, $[a_{+}] = [a \circ \huaB(a)]$, which means that there exists $b \in K_{+}$ such that
$$ a_{+} = a \circ \huaB(a) \circ b. $$
Let $a^{\prime} = a \ast b$. Then we have
\begin{eqnarray*}
\Phi(a^{\prime}) &=& ((a \ast b) \circ \huaB(a \ast b), \huaB(a \ast b))\\
 &=& (a \circ \Ad^{\circ}_{\huaB (a)}(b) \circ \huaB(a) \circ \huaB(b),\huaB(a) \circ \huaB(b))\\
 &=& (a \circ \huaB(a) \circ b, \huaB(a))= (a_+,a_-).
\end{eqnarray*}
Therefore, $\Phi$ is surjective.

Finally, for any $a,b \in G$, by Proposition \ref{huaB+}, we have
\begin{eqnarray*}
\Phi(a \ast b) &=& (\huaB_{+}(a \ast b), \huaB(a \ast b))\\
 &=& (\huaB_{+}(a)\circ \huaB_{+}(b), \huaB(a)\circ \huaB(b))\\
 &=& (\huaB_{+}(a), \huaB(a)) \circ_{D} (\huaB_{+}(b), \huaB(b))\\
 &=& \Phi(a) \circ_{D} \Phi(b).
\end{eqnarray*}
Therefore, $\Phi$ is an isomorphism of braces from $(G_{\huaB},\cdot, \ast)$ to $(G_{\Theta},\cdot_{D},\circ_{D})$.
\end{proof}

\begin{thm}\label{fac-thm-eRB-brace}
{\rm({\bf Factorization Theorem} of enhanced Rota-Baxter two-sided braces)}
Let $(G, \huaB)$ be an enhanced Rota-Baxter two-sided brace. Then every element $a \in G$ can be uniquely expressed as $a = a_{+}\circ \overline{a_{-}}=a_{-}^{-1} \cdot a_{+}$ for $(a_{+}, a_{-}) \in G_{\Theta}$.
\end{thm}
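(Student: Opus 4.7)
The plan is to leverage the isomorphism $\Phi:(G_\huaB,\cdot,\ast)\to(G_\Theta,\cdot_D,\circ_D)$ established in the preceding lemma, which sends $a\mapsto(\huaB_+(a),\huaB(a))$. The canonical candidate for the factorization of any $a\in G$ is simply $(a_+,a_-):=\Phi(a)=(\huaB_+(a),\huaB(a))$, which lies in $G_\Theta$ by construction.

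For existence, the $\circ$-factorization is immediate from the definition of $\huaB_+$: one has $a_+\circ\overline{a_-}=(a\circ\huaB(a))\circ\overline{\huaB(a)}=a$. For the $\cdot$-factorization, I would invoke the identity $b\circ\huaB(b)=\huaB(b)\cdot b$ from \eqref{enhance-RBO}, which is the key consequence of enhancement. Applied to $a$, it yields $a_+=a\circ\huaB(a)=\huaB(a)\cdot a=a_-\cdot a$, so that $a_-^{-1}\cdot a_+=a$ in the abelian group $(G,\cdot)$.

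For uniqueness, suppose $(a_+,a_-)\in G_\Theta$ is any pair with $a_+\circ\overline{a_-}=a$. Since $\Phi$ is a bijection, there is a unique $b\in G$ with $\Phi(b)=(a_+,a_-)$, that is, $\huaB_+(b)=a_+$ and $\huaB(b)=a_-$. Then $b=\huaB_+(b)\circ\overline{\huaB(b)}=a_+\circ\overline{a_-}=a$, which forces $(a_+,a_-)=(\huaB_+(a),\huaB(a))$. An analogous argument starting from the $\cdot$-factorization yields the same conclusion, so the factorizing pair is pinned down by $a$.

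The genuinely substantive point is the second factorization: enhancement of $\huaB$ is precisely what upgrades the Rota-Baxter identity to the stronger form \eqref{enhance-RBO}, and it is this form that makes the $\cdot$- and $\circ$-factorizations align on the same pair $(a_+,a_-)$. A plain Rota-Baxter operator would suffice to recover a $\circ$-factorization on its descendent brace but would not automatically produce a matching $\cdot$-factorization. The main (conceptual) obstacle is thus simply to recognize that this compatibility is exactly where the enhanced hypothesis is needed, after which uniqueness reduces to the injectivity of $\Phi$.
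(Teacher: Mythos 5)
Your proof is correct, and its existence half coincides with the paper's: both take $(a_+,a_-)=(\huaB_+(a),\huaB(a))\in G_\Theta$, note $a_+\circ\overline{a_-}=a$ by definition of $\huaB_+$, and use the consequence $b\circ\huaB(b)=\huaB(b)\cdot b$ of the enhanced condition to obtain the matching $\cdot$-factorization with the same pair. Where you genuinely diverge is uniqueness. You reduce it to the bijectivity of $\Phi:G_{\huaB}\to G_\Theta$ from the preceding lemma: any $(a_+,a_-)\in G_\Theta$ equals $\Phi(b)$ for a unique $b$, and then $a_+\circ\overline{a_-}=b\circ\huaB(b)\circ\overline{\huaB(b)}=b$, so the factorization forces $b=a$ and hence $(a_+,a_-)=\Phi(a)$; this is valid and also covers the $\cdot$-version. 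The paper instead argues directly with two competing factorizations $a=a_+\circ\overline{a_-}=b_+\circ\overline{b_-}$, forms $\overline{b_+}\circ a_+=\overline{b_-}\circ a_-=\huaB(s)\in G_+\cap G_-$, and uses the Cayley transform $\Theta$ on the quotients to show $s\in K_+$ and $\huaB(s)=e$. Your route is shorter and makes explicit that the factorization theorem is essentially a restatement of the injectivity and surjectivity of $\Phi$ already proved; the paper's route buys a more self-contained verification that exhibits the triviality of the relevant part of $G_+\cap G_-$, the analogue of the ``trivial intersection'' clause in classical group factorization theorems. Both arguments rest on the same preceding lemma, so there is no gap in your appeal to it.
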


\begin{proof}
For any $a \in G$, we have $a = \huaB_{+}(a) \circ \overline{\huaB(a)}
$ which also equal to $\huaB(a)^{-1} \cdot \huaB_{+}(a)$ by \eqref{enhance-RBO}. To see the uniqueness, if $a = a_{+}\circ \overline{a_{-}} = b_{+}\circ \overline{b_{-}}$, then $$\overline{b_{+}}\circ a_{+} = \overline{b_{-}}\circ a_{-} \in G_{+} \cap G_{-},~\mbox{ and~}~ \Theta(\overline{b_{-}}\circ a_{-}) = \overline{b_{+}}\circ a_{+}.$$ Suppose $\overline{b_{+}}\circ a_{+} = \overline{b_{-}}\circ a_{-} = \huaB(s) \in G_{+} \cap G_{-}$ for some $s \in G$. Then
$$ \Theta([\huaB(s)]) = [\huaB_{+}(s)] = [s \circ \huaB(s)] = [\huaB(s)],$$
which implies that $s\circ \huaB(s) = \huaB(s) \circ c$ for some $c \in K_{+}$. Note that $K_{+} \subset G_{+}$ as an ideal. Then we have $s =\huaB(s) \circ c \circ \overline{\huaB(s)} \in K_{+}$. Therefore, $\huaB(s)= e$. Hence we get $b_{+} = a_{+}$ and $b_{-} = a_{-}$.
\end{proof}

\section{Examples}\label{sec:enumeration}

In this section, we determine Rota-Baxter operators and enhanced Rota-Baxter operators on the two-sided brace induced by the $3$-dimensional Heisenberg Lie algebra $\g$. Moreover, we give the corresponding post-braces and descendent braces. Finally, we  present two Drinfel'd isomorphic solutions $R_1$ and $R_2$ of the Yang-Baxter equation on $\g$ by applying Theorem \ref{post-brace-YBE}.

\begin{ex}{\rm(\cite{GKST})}\label{ex:brace-Lie-alg}
  Let $\g$ be the $3$-dimensional Heisenberg Lie algebra over the field of real numbers $\mathbb{R}$ with the basis $\{e_1,e_2,e_3\}$ satisfying
$$ [e_1,e_2]=e_3,\quad [e_1,e_3]=[e_2,e_3]=0.  $$
Then $(\g,+,\circ)$ is a two-sided brace, where $+$ is the natural abelian group structure on the vector space $\g$ and the group structure $\circ$ is given by
$$ x \circ y:=x+y+\frac{1}{2}[x,y],\quad \forall x,y\in \g.$$
\end{ex}

Since a Rota-Baxter operator  $\huaB:\g \to \g$ on the brace $(\g,+,\circ)$  satisfies $\huaB(x+y)=\huaB(x)+\huaB(y)$, there holds $\huaB(r x)=r\huaB(x)$ for $r\in \mathbb{Q}$. We further assume that $\huaB$ is continuous. Then we have  $\huaB(r x)=r\huaB(x)$ for $r\in \mathbb{R}$, i.e. $\huaB$ is a linear map. 
 With respect to the basis $\{e_1,e_2,e_3\}$ given above, a linear map $\huaB$ is determined by its matrix which we denote by the same symbol, via
$$
\huaB(x)=\begin{pmatrix}
  B_{11} & B_{12} & B_{13} \\  
  B_{21} & B_{22} & B_{23} \\  
  B_{31} & B_{32} & B_{33} 
  \end{pmatrix}\begin{pmatrix}
 x_1  \\  
  x_2 \\  
  x_3 
  \end{pmatrix},\quad \forall x=x_1 e_1+x_2 e_2+x_3 e_3.
$$

\begin{ex}\label{ex:3ex}
There are three classes of Rota-Baxter operators on the two-sided brace $(\g,+,\circ)$ given in Example \ref{ex:brace-Lie-alg} as follows: 
\begin{itemize}
\item[{\rm(i)}]
{\small \[ \left\{ \left .
  \huaB= \begin{pmatrix}
  B_{11} & B_{12} & 0 \\  
  B_{21} & B_{22} & 0 \\  
  B_{31} & B_{32} & 0 
  \end{pmatrix}\,\right|\,
  \begin{vmatrix}
  B_{11} & B_{12} \\
  B_{21} & B_{22}
  \end{vmatrix}=0,\ B_{31},B_{32} \in \mathbb{R} \right\};
  \]
}
\item[{\rm(ii)}]
{\small
	\[ \left\{ \left .
  \huaB=\begin{pmatrix}
  B_{33}+\sqrt{B_{33}(B_{33}+1)} & 0 & 0 \\  
  0 & B_{33}+\sqrt{B_{33}(B_{33}+1)} & 0 \\  
  B_{31} & B_{32} & B_{33} 
  \end{pmatrix}\,\right |\,
  B_{31},B_{32} \in \mathbb{R},B_{33}\in (-\infty,-1)\cup(0,\infty) \right\};
  \]
}
\item[{\rm(iii)}]
{\small \[ \left\{ \left.
  \huaB= \begin{pmatrix}
  B_{33}-\sqrt{B_{33}(B_{33}+1)} & 0 & 0 \\  
  0 & B_{33}-\sqrt{B_{33}(B_{33}+1)} & 0 \\  
  B_{31} & B_{32} & B_{33} 
  \end{pmatrix}\,\right|\,
   B_{31},B_{32} \in \mathbb{R},B_{33}\in (-\infty,-1)\cup(0,\infty) \right\}.
  \]
}
\end{itemize}
Moreover, there is only one class of enhanced Rota-Baxter operators on $(\g,+,\circ)$ given
as follows:
\[ \left\{\left .
  \huaB=\begin{pmatrix}
  0 & 0 & 0 \\  
  0 & 0 & 0 \\  
  B_{31} & B_{32} & 0 
  \end{pmatrix}\,\right|  B_{31},B_{32} \in \mathbb{R} \right\}.
  \]

In fact, by a direct calculation, for all $x=x_1 e_1+x_2 e_2+x_3 e_3, y=y_1 e_1+y_2 e_2+y_3 e_3,$ we have
$$ [x,y]=(x_1 y_2-x_2 y_1)e_3, \quad \forall x,y\in \g, $$
which implies that
$$ x \circ y=x + y+\half(x_1 y_2-x_2 y_1)e_3.   $$
Then we have
$$ \Ad^{\circ}_x (y)=x \circ y \circ (-x)=y+(x_1 y_2-x_2 y_1)e_3.    $$
We consider the equation $\huaB(x) \circ \huaB(y) = \huaB( x \circ (\Ad^{\circ}_{\huaB (x)}y))$ and
obtain the following results by a direct and complicated calculation:
$ B_{13}=B_{23}=0 $ and
\[ \frac{1}{2}
  \begin{vmatrix}
  B_{11}x_1+B_{12}x_2   & B_{11}y_1+B_{12}y_2 \\  
  B_{21}x_1+B_{22}x_2   & B_{21}y_1+B_{22}y_2
  \end{vmatrix}=
  \frac{1}{2} B_{33}
  \begin{vmatrix}
  x_1   & y_1 \\  
  x_2   & y_2
  \end{vmatrix}+B_{33}
  \begin{vmatrix}
  B_{11}x_1+B_{12}x_2   & y_1\\  
  B_{21}x_1+B_{22}x_2   & y_2
  \end{vmatrix}, \quad \forall x_1,x_2,y_1,y_2\in \mathbb{R}.
\]
If $B_{33}\neq 0$ and $B_{33}\in (-\infty,-1)\cup(0,\infty)$, we have $B_{12}=B_{21}=0$ and $B_{11}=B_{22}=B_{33}\pm\sqrt{B_{33}(B_{33}+1)}$
by a direct calculation, which are exactly the second and third classes of Rota-Baxter operators.
If $B_{33}= 0$, we have
\[  \begin{vmatrix}
  B_{11} & B_{12} \\
  B_{21} & B_{22}
  \end{vmatrix}=0,
\]
which is the first class of Rota-Baxter operators.

Moreover, if we consider  enhanced Rota-Baxter operators on the brace $(\g,+,\circ)$ satisfying \eqref{enhanced-RBO}, that is, the following equation holds:
$$(\huaB(x)+z)\circ \huaB(y) = \huaB( x \circ (\Ad^{\circ}_{(\huaB (x) + z)}y)) + z, \quad \forall x,y,z\in \g,$$
then we have
$$ (\huaB(x)+z)\circ \huaB(y)= \huaB(x) \circ \huaB(y)+z+\frac{1}{2}(z_1 \huaB(y)_2-z_2 \huaB(y)_1 )e_3, $$
and
$$ \huaB( x \circ (\Ad^{\circ}_{(\huaB (x) + z)}y)) + z=
\huaB(x \circ \Ad^{\circ}_{\huaB (x)}y )+
\begin{vmatrix}
  z_1 & y_1 \\
  z_2 & y_2
  \end{vmatrix}\huaB(e_3)+z,
$$
which implies that
$ \begin{vmatrix}
  z_1 & y_1 \\
  z_2 & y_2
\end{vmatrix}\huaB(e_3)=\frac{1}{2}
\begin{vmatrix}
  z_1 & \huaB(y)_1 \\
  z_2 & \huaB(y)_2
\end{vmatrix}e_3.
$ According to the above notation, it is equivalent to the following equation:
$$ \begin{vmatrix}
  z_1 & B_{33} y_1-\frac{1}{2}(B_{11} y_1+B_{12} y_2) \\
  z_2 & B_{33} y_2-\frac{1}{2}(B_{21} y_1+B_{22} y_2)
\end{vmatrix}=0, \quad \forall y_1,y_2,z_1,z_2\in \mathbb{R},$$
which implies that $B_{21}=B_{12}=0$ and $B_{33}=\frac{1}{2} B_{22}=\frac{1}{2} B_{11}=0$. Therefore, all enhanced Rota-Baxter operators on the brace $(\g,+,\circ)$ are of the following form
\[ \left\{\left .
  \huaB=\begin{pmatrix}
  0 & 0 & 0 \\  
  0 & 0 & 0 \\  
  B_{31} & B_{32} & 0 
  \end{pmatrix}\,\right|  B_{31},B_{32} \in \mathbb{R} \right\}.
  \]
\end{ex}

\begin{ex}
 Let $\huaB$ be a Rota-Baxter operator on the two-sided brace $(\g,+,\circ)$ given in Example \ref{ex:brace-Lie-alg}. By Proposition \ref{pro:rRBO-post-brace}, we obtain the post-brace $(\g,+,\circ,\rhd_{\huaB})$, where the multiplication $\rhd_{\huaB}$ is given by
$$ x \rhd_{\huaB} y=\Ad^{\circ}_{\huaB(x)} y=y+[\huaB(x),y],\quad \forall x,y\in \g. $$
For the first class of Rota-Baxter operators given in Example \ref{ex:3ex},  the multiplication table for $\rhd_{\huaB}$ is given by

\begin{center}
\centering
\small
\(
\begin{array}{c|ccc}
    \rhd_{\huaB} & e_1 & e_2 & e_3\\ \hline
    e_1 & e_1-B_{21}e_3 & e_2+B_{11}e_3 & e_3 \\
    e_2 & e_1-B_{22}e_3 & e_2+B_{12}e_3 & e_3 \\
    e_3 & e_1 & e_2 & e_3 \\
\end{array}
\)
\end{center}
Then by Corollary \ref{two-descendent-brace}, the corresponding descendent brace $(\g,+,\ast)$ is given by
$$ x \ast y=x \circ (x \rhd_{\huaB} y)=x+y+[\huaB(x),y]+\frac{1}{2}[x,y+[\huaB(x),y]]=x+y+[\huaB(x),y]+\frac{1}{2}[x,y], $$
which is actually a two-sided brace by a direct examination. In the case of the first class of Rota-Baxter operators given in Example \ref{ex:3ex}, the multiplication table for $\ast$ is given by

\begin{center}
\centering
\small
\(
\begin{array}{c|ccc}
    \ast & e_1 & e_2 & e_3\\ \hline
    e_1 & 2e_1-B_{21}e_3 & e_1+e_2+(\frac{1}{2}+B_{11})e_3 & e_1+e_3 \\
    e_2 & e_1+e_2-(\frac{1}{2}+B_{22})e_3 & 2e_2+B_{12}e_3 & e_2+e_3 \\
    e_3 & e_1+e_3 & e_2+e_3 & 2e_3 \\
\end{array}
\)
\end{center}
\end{ex}

\begin{ex}
We consider two non-degenerate involutive solutions $R_1$ and $R_2$ of the Yang-Baxter equation given in Theorem \ref{post-brace-YBE} on the two-sided brace given in Example \ref{ex:brace-Lie-alg}. By a direct calculation, for all $x,y\in \g$, we have
\begin{eqnarray*}
-x+(x \circ y)=-x+x+y+\frac{1}{2}[x,y]=y+\frac{1}{2}[x,y],
\end{eqnarray*}
and
\begin{eqnarray*}
\overline{-x+(x \circ y)}\circ x \circ y=(-y-\frac{1}{2}[x,y])\circ x \circ y
= (x-y)\circ y = x+\frac{1}{2}[x,y],
\end{eqnarray*}
which implies that $$R_1(x,y)=(y+\frac{1}{2}[x,y],x+\frac{1}{2}[x,y]).$$
By a direct calculation, we have
$$x^\dagger=-x+[\huaB(x),x],$$
as the inverse of $x$ in the group $(\g,\ast)$. Then by the fact that $\huaB([x,y])=0$ for all $x,y\in \g$, we have
\begin{eqnarray*}
-x+x \ast y=-x+x+y+ \frac{1}{2}[x,y]+[\huaB(x),y]=y+ \frac{1}{2}[x,y]+[\huaB(x),y],
\end{eqnarray*}
and
\begin{eqnarray*}
(-x+x \ast y)^\dagger \ast x \ast y
&=& (-y-\frac{1}{2}[x,y]-[\huaB(x),y]+[\huaB(y),y])\ast x \ast y \\
&=& (x-y-[\huaB(x),y]+[\huaB(y),y]-[\huaB(y),x])\ast y \\
&=& x+\frac{1}{2}[x,y]+[x,\huaB(y)],
\end{eqnarray*}
which implies that $$R_2(x,y)=(y+ \frac{1}{2}[x,y]+[\huaB(x),y],x+\frac{1}{2}[x,y]+[x,\huaB(y)]).$$
In this case, $R_2$ can be expressed in terms of $\ast$ as follows:
$$ R_2(x,y)=(-x+x \ast y,y+(-y)\ast x), \quad \forall x,y\in \g. $$
Then by Theorem \ref{post-brace-YBE}, for all $x,y\in \g$, we have
\begin{eqnarray*}
  (L^\blacktriangleright_x)^{-1}(-x+ (x\circ y))
&=& x^\dagger \ast (x \circ y)\\
&=& (-x+[\huaB(x),x]) \ast (x+y+\frac{1}{2}[x,y])\\
&=& -x+[\huaB(x),x]+x+y+ \frac{1}{2}[x,y]-\frac{1}{2}[x,y]-[\huaB(x),x]-[\huaB(x),y]\\
&=& y-[\huaB(x),y],
\end{eqnarray*}
which implies that $R_1$ is Drinfel'd isomorphic to $R_2$ via the map $\overline{\omega}:\g \times \g \to \g \times \g$ given by
$$ \overline{\omega}(x,y)=(x,y-[\huaB(x),y]), \quad \forall x,y\in \g.$$
\end{ex}

\noindent
{\bf Acknowledgements.} This research is supported by NSFC (12471060, W2412041, 123B2017).

\smallskip
\noindent
{\bf Declaration of interests. } The authors have no conflicts of interest to disclose.

\smallskip
\noindent
{\bf Data availability. } Data sharing is not applicable as no data were created or analyzed.

\vspace{-.2cm}


\begin{thebibliography}{a}

\bibitem{Bachiller2}
D. Bachiller, Extensions, matched products, and simple braces. \emph{J. Pure Appl. Algebra} {\bf 222} (2018), no. 7, 1670-1691.

\bibitem{BBGN}
C. Bai, O. Bellier, L. Guo and X. Ni, Splitting of operations, Manin products, and Rota-Baxter operators. \emph{Int. Math. Res. Not.} (2013), no. 3, 485-524.

\bibitem{BGN}
C. Bai, L. Guo and X. Ni, Nonabelian generalized Lax pairs, the classical Yang-Baxter equation and PostLie algebras. \emph{Comm. Math. Phys.} {\bf 297} (2010), 553-596.

\bibitem{BGST}
C. Bai, L. Guo, Y. Sheng and R. Tang, Post-groups, (Lie-)Butcher groups and the Yang-Baxter equation. \emph{Math. Ann.} {\bf 388} (2024), no. 3, 3127-3167.

\bibitem{BaG}
V. G. Bardakov and  V. Gubarev, Rota-Baxter operators on groups. \emph{Proc. Indian Acad. Sci. Math. Sci.} {\bf 133} (2023), No. 4.

\bibitem{BaG2}
V. G. Bardakov and  V. Gubarev, Rota-Baxter groups, skew left braces, and the Yang-Baxter equation.  \emph{ J. Algebra} {\bf 596} (2022), 328-351.

\bibitem{G-Baxter}
G. Baxter, An analytic problem whose solution follows from a simple algebraic identity. \emph{Pacific J. Math.} {\bf 10} (1960), 731-742.

\bibitem{Baxter}
R. J. Baxter, Partition function of the eight-vertex lattice model. \emph{Ann. Phys.} {\bf 70} (1972), 193-228.

\bibitem{Burde}
D. Burde, Left-symmetric algebras, or pre-Lie algebras in geometry and physics. \emph{Cent. Eur. J. Math.} {\bf 4} (2006), 323-357.

\bibitem{Tomasz}
T. Brzezi\'nski, Trusses: between braces and rings. \emph{Trans. Amer. Math. Soc.} {\bf 372} (2019), no. 6, 4149-4176.


\bibitem{CMS}
F. Catino, M. Mazzotta and P. Stefanelli, Inverse semi-braces and the Yang-Baxter equation. \emph{J. Algebra} {\bf 573} (2021), 576-619.

\bibitem{CJd}
F. Ced\'o, E. Jespers and \'A. del R\'io, Involutive Yang-Baxter groups. \emph{Trans. Amer. Math. Soc.} {\bf 362} (2010), no. 5, 2541-2558.

\bibitem{CGO}
F. Ced\'o, E. Jespers and J. Okni\'nski, Braces and the Yang-Baxter equation. \emph{Comm. Math. Phys.} {\bf 327} (2014), 101-116.

\bibitem{CEMM}
T. Chtioui, M. Elhamdadi, S. Mabrouk and A. Makhlouf, Rota-Baxter type operators on trusses and derived structures. arXiv:2504.19293.

\bibitem{DRS}
A. Doikou, B. Rybolowicz and P. Stefanelli, Quandles as pre-Lie skew braces, set-theoretic
Hopf algebras and universal R-matrices. \emph{J. Phys. A} {\bf 57} (2024), no. 40, Paper No. 405203.

\bibitem{CK}
A. Connes and D. Kreimer, Renormalization in quantum field theory and the Riemann-Hilbert problem. I. The Hopf algebra structure of graphs and the main theorem. \emph{Comm. Math. Phys.} {\bf 210} (2000), 249-273.

\bibitem{Drinfeld}
V. G. Drinfel'd, On some unsolved problems in quantum group theory. \emph{Lect. Notes in Math.} {\bf 1510} (1992), 1-8.


\bibitem{ESS}
P. Etingof, T. Schedler and A. Soloviev, Set-theoretical solutions to the quantum Yang-Baxter equation. \emph{ Duke Math. J.}  {\bf100} (1999),  169-209.

\bibitem{Gateva-Ivanova2}
T. Gateva-Ivanova, Set-theoretic solutions of the Yang-Baxter equation, braces and symmetric groups. \emph{Adv. Math.}  {\bf 338} (2018), 649-701.

\bibitem{Go} M. Goncharov, Rota-Baxter operators on cocommutative Hopf algebras, \textit{J. Algebra} {\bf 582} (2021), 39-56.

\bibitem{GKST}
M. Goncharov, P. Kolesnikov, Y. Sheng and R. Tang, Formal integration of complete Rota-Baxter Lie algebras. arXiv:2404.09261, to appear in \emph{Selecta Math. }

\bibitem{GV}
L. Guarnieri and L. Vendramin, Skew braces and the Yang-Baxter equation. \emph{Math. Comp.} {\bf 86} (2017), 2519-2534.

\bibitem{Guo}
L. Guo,  An introduction to Rota-Baxter algebra.  International Press, 2012.

\bibitem{GLS}
L. Guo, H. Lang and Y. Sheng, Integration and geometrization of Rota-Baxter Lie algebras. \emph{Adv. Math.} {\bf 387} (2021), 107834.


\bibitem{JKVV}
E. Jespers, L. Kubat, A. Van Antwerpen and L. Vendramin, Factorizations of skew braces. \emph{Math. Ann.} {\bf 375} (2019), 1649-1663.

\bibitem{JSZ}
J. Jiang, Y. Sheng and C. Zhu, Lie theory and cohomology of relative Rota-Baxter operators. \emph{J. Lond. Math. Soc. (2)} {\bf 109} (2024), no. 2, Paper No. e12863.

\bibitem{Kupershmidt}
B. A. Kupershmidt, What a classical r-matrix really is. \emph{J. Nonlinear Math. Phys.} {\bf 6} (1999), 448-488.

\bibitem{LYZ}
J. Lu, M. Yan and Y. Zhu, On the set-theoretical Yang-Baxter equation. \emph{Duke Math. J.} {\bf 104} (2000), 1-18.

\bibitem{MRS}
M. Mazzotta, B. Rybo$\l$owicz and P. Stefanelli, Deformed solutions of the Yang-Baxter equation associated to dual weak braces. \emph{Ann. Mat. Pura Appl.} {\bf 204} (2025), no. 2, 711-731.

\bibitem{Rump3}
W. Rump, A decomposition theorem for square-free unitary solutions of the quantum Yang-Baxter equation. \emph{Adv. Math.} {\bf 193} (2005), 40-55.

\bibitem{Rump}
W. Rump, Braces, radical rings, and the quantum Yang-Baxter equation. \emph{J. Algebra} {\bf 307} (2007), 153-170.

\bibitem{STS}
M.\,A. Semenov-Tian-Shansky, What is a classical r-matrix? \emph{Funct. Anal. Appl.} {\bf 17} (1983), 259-272.

\bibitem{Smoktunowicz1}
A. Smoktunowicz, On Engel groups, nilpotent groups, rings, braces and the Yang-Baxter equation.
\emph{Trans. Amer. Math. Soc.} {\bf 370} (2018), no. 9, 6535-6564.

\bibitem{SV}
A. Smoktunowicz and L. Vendramin, On skew braces. (with an appendix by N. Byott and L. Vendramin)
 \emph{J. Comb. Algebra} {\bf 2} (2018), no. 1, 47-86.

\bibitem{Soloviev}
A. Soloviev, Non-unitary set-theoretical solutions to the quantum Yang-Baxter equation. \emph{Math. Res. Lett.} {\bf 7} (2000), no. 5-6, 577-596.

\bibitem{Ta}
M. Takeuchi, Survey on matched pairs of groups--an elementary approach to the ESS-LYZ theory. \emph{Banach Center Publ.} {\bf61}  (2001), 305-331.

\bibitem{Trappeniers}
S. Trappeniers, On two-sided skew braces. \emph{J. Algebra} {\bf 631} (2023), 267-286.

\bibitem{Cindy}
C. Tsang, On Gr\"{u}n's lemma for perfect skew braces. arXiv:2409.18410.

\bibitem{Vendramin}
L. Vendramin, Skew braces: a brief survey. \emph{Trends Math.} (2024), 153-175.

\bibitem{Yang}
C. Yang, Some exact results for the many-body problem in one dimension with repulsive delta-function interaction. \emph{Phys. Rev. Lett.} {\bf 19} (1967), 1312-1315.

\end{thebibliography}
\end{document}